

\documentclass[preprint,12pt]{elsarticle}




\pdfoutput=1
\usepackage{graphicx}
\usepackage{epsfig}
\usepackage{color}
\usepackage{ulem}
\usepackage{multirow}
\usepackage{amsmath, amsfonts}
\usepackage{amssymb}
\usepackage{latexsym}
\usepackage{hyperref}
\usepackage{ifthen}
\graphicspath{{figures/}}

\usepackage{amsthm}

\newtheorem{remark}{Remark}[section]
\newtheorem{theorem}{Theorem}[section]
\newtheorem{corollary}{Corollary}[section]


\journal{.}

\begin{document}

\begin{frontmatter}


 \author{Hwayeon Ryu\corref{cor1}}
 \ead{hryu@hartford}
 \cortext[cor1]{Corresponding author}
 \address{Department of Mathematics\\University of Hartford, West Hartford, CT 06117, USA}
\author{Sue Ann Campbell}
 \ead{sacampbell@uwaterloo.ca}
 \address{Department of Applied Mathematics and Centre for Theoretical Neuroscience\\University of Waterloo, Waterloo, Ontario, N2L 3G1, Canada}

\title{Geometric Analysis of Synchronization in Neuronal Networks with Global Inhibition and Coupling Delays}

\begin{abstract}
We study synaptically coupled neuronal networks to identify the role of coupling delays in network's synchronized behaviors. We consider a network of excitable, relaxation oscillator neurons where two distinct populations, one excitatory and one inhibitory, are coupled and interact with each other. The excitatory population is uncoupled, while the inhibitory population is tightly coupled. A geometric singular perturbation analysis yields existence and stability conditions for synchronization states under different firing patterns between the two populations, along with formulas for the periods of such synchronous solutions. Our results demonstrate that the presence of coupling delays in the network promotes synchronization. Numerical simulations are conducted to supplement and validate analytical results. We show the results carry over to a model for spindle sleep rhythms in thalamocortical networks, one of the biological systems which motivated our study. The analysis helps to explain how coupling delays in either excitatory or inhibitory synapses contribute to producing synchronized rhythms. 
\end{abstract}

\begin{keyword}
Neural Networks \sep  Synchronization \sep Delays \sep Geometric singular perturbation 



\end{keyword}

\end{frontmatter}

\section{Introduction}
Oscillatory behavior in neuronal networks has been one of the main subjects to better understand the central nervous system~\cite{Linas88,Jacklet89,steriade90,traub91,buzsaki94}. Examples of dynamic behaviors include synchronization~\cite{Golomb94,Luz}, in which each cell in the network fires at the same time, and clustering~\cite{GR94,RT00}, in which the entire population of cells breaks up into subpopulations or clusters; cells within a single population fire at the same time but are desynchronized from ones in different subpopulations. Much more complicated network behaviors~\cite{traub91,TL97,KL94}, such as traveling waves~\cite{KimBal95,DBMS96,Golomb96,Rinzel98,TEY01}, are also possible.  

Neurons are connected mainly via chemical synapses, the junction of two nerve cells, through which information from one neuron transmits to another neurons, resulting in synaptic coupling. For this communication, the electrical signal must
travel along the axon of one neuron to the synapse, resulting in a {\it conduction delay}. The size of this delay depends on the diameter and length of the axon and whether or not it is myelinated \cite{tomasi2012}.  Further, 
once the electrical signal reaches the synapse, time is required for a neurotransmitter to be released and to travel through the synaptic cleft, a tiny gap between the nerve cells, and for the transmitter to cause an effect (through chemical reactions) on the postsynaptic cell. This time is called a {\it synaptic delay}. We call the combined effect of these two delays {\it coupling delay}. Synapses can be broadly classified in two types, excitatory and inhibitory each associated with particular neurons. Excitatory synapses tend to promote the transmission of electrical signals while inhibitory synapses tend to suppress the transmission. Although excitatory neurons are much more common in the brain \cite{douglas2007}, it has become increasingly apparent that inhibitory neurons play an important role in producing and regulating the behavior of brain networks \cite{roux2015}. Thus it is important to consider networks including both inhibitory and excitatory neurons.

The synaptic types, length of the delays, network connectivity and intrinsic properties of the neurons 
all interact to produce a variety of dynamic network behaviors, such as synchronization and clustering
\cite{GR94,BT,CampWang17,SSA11,MRTWBC,OroszSIADS14,choe2010,DLS12,Luz,KPR}.
Due to the richness of qualitatively different network behaviors caused by delays, the impacts of delays on such emergent network patterns are the key to understanding the information processing functions in the brain.
Many studies have been done on the effects of delays on networks where the synapses are exclusively excitatory or 
inhibitory \cite{CampWang98,FJWC01,Crook97,CampWang17,SSA11}. Here we address the role of delays in a network with both. There are many potential choices of network connectivity. We focus on a network with {\it global inhibition}, which consists of a uncoupled or sparsely coupled excitatory network reciprocally coupled to a highly connected inhibitory population. Networks with such structure are associated with rhythm generation in the CA1 region of hippocampus~\cite{bezaire2013} and the thalamus~\cite{DS97,DBMS96,CDSS97}, and with sensory processing~\cite{poo2009,doiron2003}. For the neural model, we focus on excitable, relaxation oscillators, the behavior of which is representative of many types of neurons. Our network may exhibit synchronous solutions and we prove sufficient conditions for the existence and stability of such solutions in terms of coupling delays. These results help to provide insight into how the intrinsic properties of individual cells interact with the synaptic properties, including coupling type and delays, to produce the emergent population rhythms. For example, we show that the presence of coupling delays may play a significant role in producing stable synchronous behaviors. 

We adapt geometric singular perturbation methods to analyze the mechanisms responsible for synchronization behaviors. The fundamental idea of this approach is to construct singular solutions by separating a system of differential equations into subsystems evolving on fast and slow time scales. Under some general hypotheses, actual solutions exist near these singular solutions. In the relaxation oscillator, the variables vary repeatedly between two distinct states corresponding to so-called {\it active} and {\it silent} phases. The amount of time spent in each phase substantially exceeds the time spent in the transitions between phases. When a relaxation oscillator is used to model a neuron, the rapid transition from the silent phase to the active phase corresponds {\it firing} of an action potential in the neuron.

Geometric singular perturbation approaches have been previously used to investigate the generation of pattern formation in neuronal networks~\cite{Somers93,skinner94,terman95,TL97,TK98,CampWang98,FJWC01,LoFaro99}. Despite the well-established results, most studies simplify their models to make mathematical analysis more tractable. The resulting simplified models lack key features: i) the direct interaction of coupling delays with intrinsic dynamics of neurons and ii) the underlying architecture of the network. For example, the effect of delay was considered in~\cite{CampWang98} but only for two neurons (i.e.,~not a network), and in~\cite{FJWC01} but for networks with a single type (excitatory) of neuron. Networks of relaxation oscillators involving both excitatory and inhibitory neurons are considered as in~\cite{terman95} but without coupling delays. More recently, Rubin and Terman considered the global inhibitory network, and analyzed the existence and stability of synchronous solutions~\cite{RUBIN02} and of clustered solutions~\cite{RT00}. However, their models have no conduction delay and the synaptic delay due to the chemical kinetics of the ion channel is implicitly included in the model for synaptic gating variable. To alleviate the model simplifications mentioned above while extending the previous studies~\cite{RT00,RT00b,RUBIN02}, we include an explicit representation of delays in model equations for the global inhibitory network, which will allow a systematic study for the delays in the context of network pattern formation.

Two important questions arise in the geometric analysis. The first is associated with the existence of a singular solution corresponding to synchronization. We assume that an individual cell, without synaptic input, is unable to oscillate. Thus, the existence of network synchronous behavior depends on whether the singular trajectory is able to ``escape" from the silent phase when they are coupled. The increased cellular or network complexity enhances each cell's opportunity to escape from the silent phase. The second question is concerned with the stability of the singular solution. To demonstrate the stability of a synchronous state, we need to show that the slightly perturbed trajectories of different cells are eventually brought closer together as they evolve in phase space. We show that this compression depends on the underlying network architecture as well as nontrivial interactions between the intrinsic and synaptic properties of the cells~\cite{TK98}. Our analysis shows, for example, how delays promote stable synchronized behaviors due to their interaction with intrinsic properties of neurons.

The reminder of the paper is organized as follows. In Section~\ref{model}, we present the models for individual relaxation oscillators and for the dynamic coupling between oscillators which will be used in our study. Also we describe the architecture of the global inhibitory network consisting of two distinct populations of oscillators; one population inhibits the other, which in turn excites the first population. Section~\ref{model} also introduces basic terminology needed for singular perturbation analysis, including the notion of a singular solution. In Section~\ref{analysis}, we present the statement and proof of existence and stability results under different conditions on the relative duration of the active phase between two populations. Section~\ref{numerical} follows to supplement our analytical results in Section~\ref{analysis} by illustrating the synchronous solutions obtained by numerical simulations. Also this section includes the numerical results for thalamic models motivated by thalamocortical networks~\cite{DMS93,SMS93,Golomb94,DS97,TBK96}. Finally, we conclude with a discussion in Section~\ref{discussion}.
\section{The Models}\label{model}
We describe the model equations corresponding to individual, uncoupled cells. There are two types: one for inhibitory cells and one for excitatory cells. Then, we introduce the synaptic coupling between the cells, delays, and network architecture to be considered. Finally, based on the model equations corresponding to the network, we consider fast and slow subsystems, which will be used for singular geometric analysis in subsequent sections.

\subsection{Single cells}
We model an individual cell of the networks as a relaxation oscillator, whose equations are given by
\begin{align} 
\label{singx}
\dot x&=f(x,y),\\
\label{singy}
\dot y&=\epsilon g(x, y),
\end{align}
where $^.={d\over dt}$, $x \in \mathbb{R}$, and $y\in \mathbb{R}^n$. For simplicity, we consider $n=1$ in our analysis (see \cite{RT00b} for an example with $n>1$). Here we assume $0<\epsilon \ll 1$ for singular geometric analysis so that $x$ is a fast variable and $y$ is a slow variable. Also, we assume that the $x$-nullcline, $f(x,y)=0$, is a cubic function, with left, middle, right branches, and $f>0$ ($f<0$) above (below) the $x$-nullcline curve. In addition, the $y$-nullcline is assumed to be a monotone decreasing function that intersects $f=0$ at a unique fixed point, and $g>0$ ($g<0$) below (above) the $y$-nullcline curve. See Figure~\ref{single_orbit}.

\begin{figure} [htb!]
\centering
  \includegraphics[height=60mm, width=90mm]{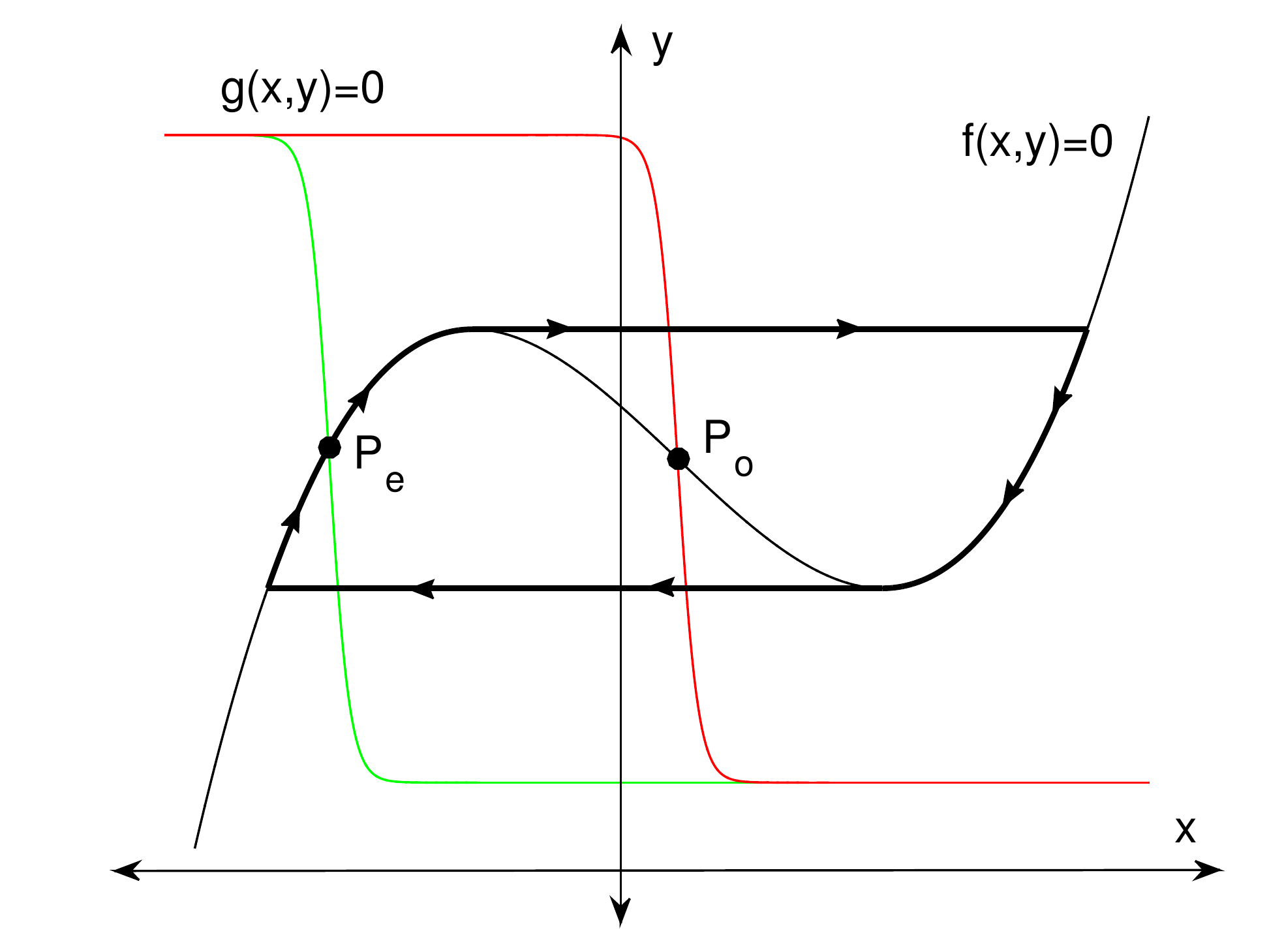}
  \caption{Nullclines for Eqs.~(\ref{singx})--(\ref{singy}) in both excitable (green line) and oscillatory (red line) cases. $P_e$ and $P_o$ correspond to the unique fixed points for excitable and oscillatory systems, respectively. The solid line shows a singular periodic solution for the oscillatory system.}
  \label{single_orbit}
\end{figure}

Depending on the location of the fixed point along the $x$-nullcline, we have two different situations: (i) the system is {\it excitable} if the fixed point lies on the left branch of $f=0$, as labeled $P_e$ in Fig.~\ref{single_orbit}; (ii) the system is {\it oscillatory} if the fixed point lies on the middle branch of $f=0$, labeled $P_o$. For the excitable system, $P_e$ is a stable fixed point, and no periodic solutions arise for all small $\epsilon$. However, if a sufficient amount of input is applied to the excitable system, the solution can jump to the right branch of $f=0$ and remain there for some time before
returning to the fixed point $P_e$, in this case we say the neuron {\it fires} or generates an action potential. 
On the other hand, in the oscillatory system, Eqs.~(\ref{singx})--(\ref{singy}) yield a periodic solution for all sufficiently small $\epsilon$, as shown in Fig.~\ref{single_orbit}. Since the thalamic cells we model in this study are known to be excitable during the sleep state \cite{steriade90,SMS93,DS97}, we will focus on the excitable system in subsequent sections.

\subsection{Synaptic coupling and network architecture}
We consider networks with the architecture as shown in Fig.~\ref{network}, which are motivated by models for the thalamic sleep rhythms \cite{Destexhe98,Golomb94,wang92}. In this architecture, called a {\it globally inhibitory network}, two distinct populations of cells interact with each other. Specifically, $J$-cells inhibit $E$-cells, which, in turn, excite the $J$ population. However, there is no communication among $E$-cells. In the spindle rhythms, the cells within the $J$ population are completely synchronized, thus we can view the entire $J$ population as a single cell, sending inhibition to the $E$ population globally. We assume that all $E$-cells are identical, but differ from the $J$-cell. To simplify the analysis, we shall assume that there are only two cells in $E$ population but it can be easily generalized to the case of an arbitrary number of $E$-cells.

\begin{figure} [htb!]
\centering
  \includegraphics[height=30mm, width=70mm]{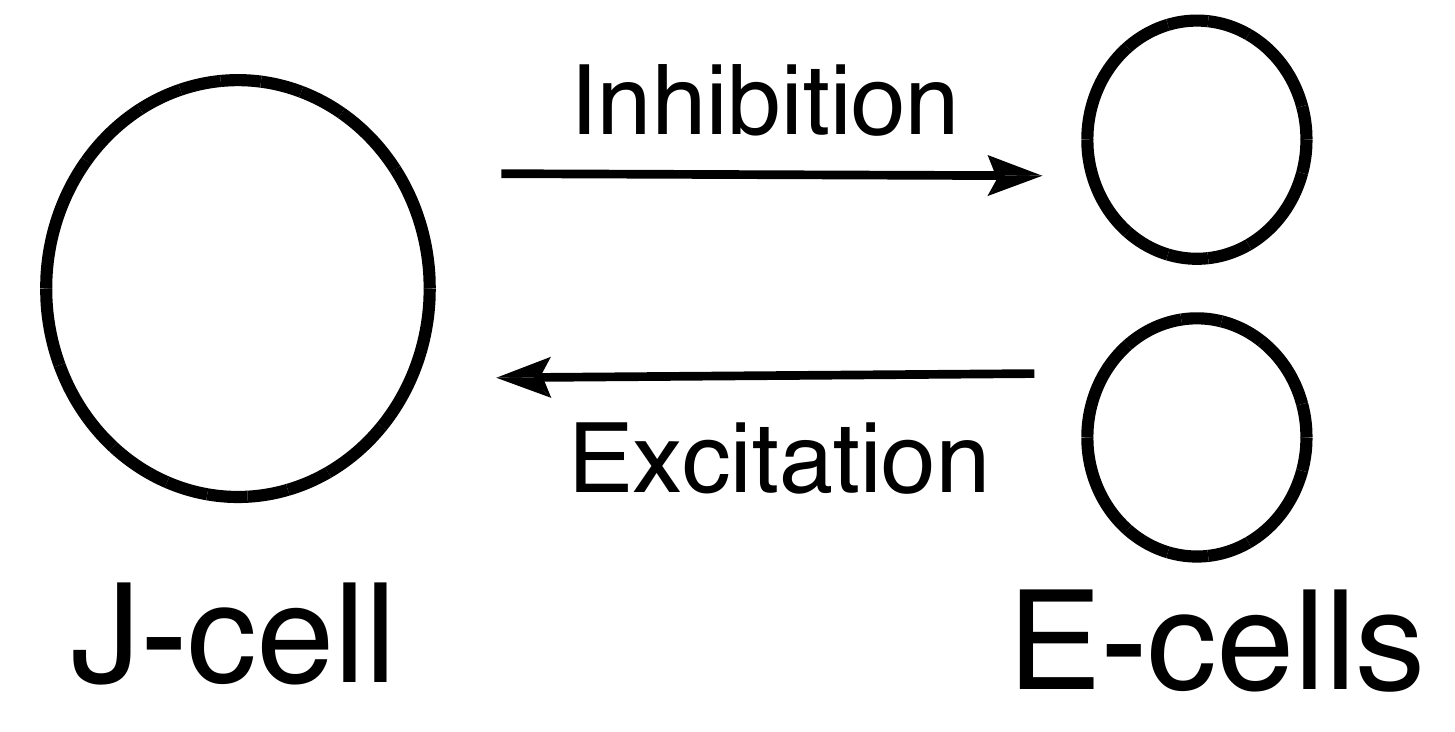}
  \caption{Schematic diagram of globally inhibitory network. The $J$-cell inhibits the $E$-cells, which, in turn, excite the $J$-cell.}
  \label{network}
\end{figure}

The equations corresponding to each $E_i$ for $i=1, 2$ in the network are
\begin{align} 
\label{excx}
\dot x_i&=f(x_i,y_i)-g_{inh}s_J(x_J(t-\tau_J))(x_i-x_{inh}),\\
\label{excy}
\dot y_i&=\epsilon g(x_i, y_i),
\end{align}
where $f$ and $g$ are defined as in Eqs.~(\ref{singx})--(\ref{singy}), and $g_{inh}>0$ represents the maximal conductance of the synapse, which can be viewed as coupling strength from the $J$-cell to each $E$-cell. The function $s_J$ determines the {\it inhibitory} synaptic coupling from $J$ to $E$. It is a sigmoidal
function which takes values in $[0,1]$.
Since the $J$-cell sends inhibition to the $E$-cells, $x_{inh}$, the reversal potential for the synaptic connection, is set so that $x_i-x_{inh}>0$. Finally, $\tau_J$ denotes the delay in the inhibitory synapse.  

The model equations for $J$ are similarly given by
\begin{align} 
\label{inhx}
\dot x_J&=f_J(x_J,y_J)-g_{exc}\left({1\over N}\sum_{i} s_i(x_i(t-\tau_E))\right)(x_J-x_{exc}),\\
\label{inhy}
\dot y_J&=\epsilon g_J(x_J, y_J),
\end{align}
where $g_{exc}$ denotes the maximal conductance of the {\it excitatory} synapse from $E$ to $J$. As in the model for the $E$ cell, the $s_i$ are sigmoidal 
functions with values in $[0,1]$.
The reversal potential for the {\it excitatory} synapse, denoted by $x_{exc}$, is chosen so that $x_J-x_{exc}<0$. The delay in the excitatory synapse, $\tau_E$, is assumed to be same for all the $E$-cells. For the case of two $E$-cells in the network, let us define $s_{tot}\equiv {1\over 2}(s_1+s_2)$.
Note that we do not incorporate chemical kinetics for synapses into our model. 
However, $\tau_E$ and $\tau_J$ include the effect of delays due to the chemical kinetics, as well as other factors.

Equations~\eqref{excx}--\eqref{inhy} form a four dimensional system of delay differential
equations. The appropriate initial data for such a system specifies functions 
for the variables on the interval $\tau\le t\le 0$, where 
$\tau=\max(\tau_E,\tau_J)$, yielding an infinite dimensional phase space.
In our analysis, however, we will assume that the synaptic functions $s_i$ and $s_J$ 
are Heaviside step functions, thus the values switch between 0 and 1 at the 
threshold $x$-value. The system~\eqref{excx}--\eqref{inhy} then becomes a discontinuous 
or switched system of ordinary differential equations, with a delayed 
switching manifold. That is, at any time the system evolves according to the
ODEs given by Eqs.~\eqref{excx}--\eqref{inhy} with the each of the $s_i$ and $\ s_J$ either 
$0$ or $1$, but the condition that determines which system of ODEs is followed
depends on the delayed values of $x_i$ and $x_J$. While there is a fairly large
literature on the stability of such systems (see e.g.,~\cite{Fridman2002,Sun2006}), the bifurcation theory of such systems is still being developed, with many results to date based on direct analysis of specific systems 
\cite{Sieber2006,Barton2006,Sieber2010}, such as what we will carry out. 
In our numerical simulations we will take the synaptic functions to be smooth,
approximations of Heaviside step functions.

\begin{remark}
An excitable cell stays at its stable fixed point unless it receives some synaptic input. The effect of this input depends on the type of coupling. For example, since $x_i-x_{inh}>0$, inhibitory coupling decreases $\dot x_i$, making it harder for the $E$-cells to fire. On the other hand, since $x_i-x_{exc}<0$ excitatory coupling increases $\dot x_J$, making it easier for the $J$-cell to fire. 
\end{remark}

The present model is similar to the model developed by Rubin and Terman~\cite{RUBIN02} in that both describe the dynamics of synaptic connection between two distinct populations in a globally inhibitory network. However, in their model, there are additional differential equations for the synaptic gating variables, $s_i$ and $s_J$. In these equations other slow variables are introduced which ensure the existence of synchronous solution. Our model, on the other hand, has no differential equations for the synaptic variables, and the synaptic coupling is a direct function of the appropriate $x$ variable. However, we include time delays in the connections, as in \cite{CampWang98,FJWC01}. Our model is different from that of \cite{CampWang98,FJWC01} as in their models the uncoupled neurons are oscillatory, instead of excitable. 

To conduct singular perturbation analysis, we identify the {\it fast} and {\it slow subsystems} for each population's evolution by dissecting the full system of equations given in Eqs.~(\ref{excx})--(\ref{inhy}).  The {\it fast subsystem} of Eqs.~(\ref{excx})--(\ref{inhy}) is obtained by simply setting $\epsilon=0$, which results in
\begin{align} 
\dot x_i&=f(x_i,y_i)-g_{inh}s_J(x_J(t-\tau_J))(x_i-x_{inh}),\\
\dot y_i&=0,\\
\dot x_J&=f_J(x_J,y_J)-g_{exc}\left({1\over N}\sum_{i} s_i(x_i(t-\tau_E))\right)(x_J-x_{exc}),\\
\dot y_J&=0,
\end{align}
where $^.={d\over dt}$. Note that the coupling between the fast systems of the $E$ and $J$ cells is only through delayed values of the $x$ variables.

The {\it slow subsystem} is derived by first introducing a slow time scale $\tilde t=\epsilon t$ and $\tilde \tau=\epsilon \tau$, and then setting $\epsilon=0$. This leads to a reduced system of equations for the slow variables only, after solving for each fast variable in terms of the slow ones. Let $x=\Phi_L(y,s)$ denote the left branch of the cubic $f(x,y)-g_{inh} s(x-x_{inh})=0$, and $G_L(y,s)\equiv g(\Phi_L(y,s),s)$. After dropping the tildes, we have the following 
equations
\begin{align} 
\label{slow_left}
x_i&=\Phi_L(y_i,s_J),\\
\label{slow_left1}
y_i'&=G_L(y_i,s_J),\\
\label{slow_synaptic}
s_J&=s_J(x_J(t-\tau_J)),
\end{align} 
where $'={d\over d\tilde t}$. The system in Eqs.~(\ref{slow_left})--(\ref{slow_synaptic}) determines the slow evolution of $E$-cell on the left branch. The slow subsystems of $E$-cell on the right branch and of $J$-cell on either branch can be similarly derived.

The slow subsystems determine the evolution of the $y$-variables in either the left branch (the silent phase) or the
right branch (the active phase). During this evolution, each cell travels along the left or right branch of some ``cubic" nullcline, which is determined by the total amount of synaptic input that the cell receives. A fast jump occurs when one of the cells reaches the left or right ``knee" of its corresponding cubic. Once reaching the knee, the cell may either jump up from the silent to the active phase or vice versa, depending on where the cell originally travels before jumping. For neuronal models, a fast jump from the low-$x$ branch (the silent phase) to the high-$x$ branch (the active phase) represents the generation of an action potential by a neuron. Thus, in the $\epsilon=0$ limit, we can construct a {\it singular solution} by connecting the solution to the slow subsystem with jumps between branches given by solutions to the fast subsystem. The analysis we provide in this study focuses on such singular solutions. For the extensions to small positive $\epsilon$, refer the work in \cite{Mish80,MPN86,CLMP89}. 

\begin{remark}
We analyze the dynamics of the network by constructing singular solutions. If $g_{inh}$ is not too large, then $f(x,y)-g_{inh} s_J(x-x_{inh})=0$ represents a cubic-shaped curve for each $s_J \in [0,1]$. Let us denote this curve by $C_{s_J}$; curves $C_0$ and $C_1$ are shown in Figure~\ref{nullclines_nodelay2}A. The trajectory for $E_i$ lies on the left/right branches of one of these curves during the silent/active phase, respectively. Fast jumps between different phases occur when an $E_i$ reaches a left or right knee of its respective cubic. Similarly, $J$ lies on the cubic curve determined by its total synaptic input $s_{tot}$, as shown in Fig.~\ref{nullclines_nodelay2}B. Note in Fig.~\ref{nullclines_nodelay2}A that the $s_J=1$ nullcline ($C_1$) lies above the $s_J=0$ nullcline ($C_0$), while in Fig.~\ref{nullclines_nodelay2}B, the $s_{tot}=1$ nullcline lies below the $s_{tot}=0$ nullcline. These relations result from the fact that the $E_i$ receives inhibition from $J$ while $J$ receives excitation from the $E_i$. 
\end{remark}

\begin{remark}
As mentioned earlier, one motivation for the global inhibitory model we consider
is the structure of thalamocortical networks~\cite{Destexhe98,Golomb94,wang92}. In these networks, there are two distinct but coupled populations of cells, thalamocortical relay cells corresponding to $E$-cells and thalamic reticular cells corresponding to $J$-cells in our model. 
\end{remark}
\section{Model Analysis}\label{analysis}
In this section, we first consider the case of no coupling delay to prove that the synchronous solution among $E$-cells does not exist provided in Section~\ref{no delay_longJ}. We then give sufficient conditions for the existence of a singular synchronous periodic solution, under different conditions on the duration of active phases for both populations. Depending on the relative duration of $J$-cell active phase to that of $E$-cells, we consider two cases: (i) the active phase of $J$-cell is longer than $E$-cells, given in Section~\ref{subsec_longerJ}; (ii) the active phase of $E$-cells is longer than $J$-cell, given in Section~\ref{subsec_longerE}. Also, we provide a brief stability analysis at the end of each case. In the following analysis, we denote the fixed point on the left branch of $C_{s_J}$ in Eqs.~(\ref{excx})--(\ref{excy}) by ($x_F(s_J), y_F(s_J))$, the left knee by ($x_L(s_J), y_L(s_J)$), and the right knee by ($x_R(s_J), y_R(s_J)$).

\subsection{Dynamics with no delay}\label{no delay_longJ}
We first show that an oscillatory synchronous solution among $E$-cells does not exist if there is no time delay in the interaction between the two populations, regardless of their starting points. 

\begin{theorem}
\label{theorem_nodelay_longJ}
Suppose that there is no delay in the synapses for the globally inhibitory network, i.e.,~$\tau_J=\tau_E=0$, in Eqs.~(\ref{excx})--(\ref{inhy}), and that $y_F(1)>y_L(0)$ is satisfied. Then, regardless of the the starting positions for both populations, a singular synchronous periodic solution does not exist.
\end{theorem}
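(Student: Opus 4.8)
The plan is to argue by contradiction. Suppose a singular synchronous periodic solution exists. Since the two $E$-cells are identical and receive no input from one another, in a synchronous solution they coincide for all time, so we may treat them as a single cell with slow variable $y_E$ that rides the left or right branch of $C_{s_J}$, $s_J\in\{0,1\}$, together with a single $J$-cell. Because $\tau_E=\tau_J=0$, the $s_J$ seen by the $E$-cells and the $s_{tot}$ seen by $J$ are the instantaneous values, so the switching structure is ``memoryless''; this is the point at which the hypothesis $\tau_E=\tau_J=0$ is used. I would then isolate two structural facts. \emph{Fact 1 (no self-escape).} On the left branch of any $C_{s_J}$ the slow flow $y_E'=G_L(y_E,s_J)$ is monotone and drives $y_E$ to the stable fixed point $y_F(s_J)$, which lies strictly between the knees, $y_L(s_J)<y_F(s_J)<y_R(s_J)$ (cf.\ Fig.~\ref{single_orbit}). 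Hence a silent $E$-cell never reaches its left knee by the slow flow alone; it can fire only at an instant where $s_J$ switches and, at the current $y_E$, the left knee of the \emph{new} cubic lies above $y_E$. The same holds for $J$ with its own cubics.

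\emph{Fact 2 (trapping).} If at some time both populations are silent, then $s_1=s_2=0$, hence $s_{tot}=0$, so $J$ evolves as an uncoupled excitable cell and, by Fact~1, converges to its fixed point without reaching its left knee; meanwhile $s_J$ stays $0$ and the $E$-cell converges to $y_F(0)>y_L(0)$. No switch ever occurs, so the two populations remain silent for all later time. Consequently, in a periodic orbit the two populations are never simultaneously silent. Combined with Fact~1 this eliminates one of the two conceivable switching mechanisms for $E$-firing: a switch $s_J:0\to1$ (onset of inhibition, $C_0\to C_1$) could fire the $E$-cell only if it were silent on $C_0$ at that instant, i.e.\ with $J$ silent too, which cannot occur in a periodic orbit. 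Therefore, in the assumed periodic orbit, the $E$-cells can fire \emph{only} at an instant where $s_J$ switches $1\to0$ (release of inhibition, $C_1\to C_0$), and only if $y_E<y_L(0)$ there.

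It remains to show even this is impossible, and here the hypothesis closes the argument. Fix such a release instant. Just before it $s_J=1$, so the $E$-cells lie on $C_1$, and they must lie on its left (silent) branch, since a cell on the right (active) branch is merely carried onto the right branch of $C_0$ and does not jump. They entered this left branch at the right knee, $y_E=y_R(1)$ --- they cannot have arrived by a switch $C_0\to C_1$, since that again forces a simultaneously-silent state, excluded by Fact~2 --- and, by Fact~1, the slow flow only decreases $y_E$ toward $y_F(1)<y_R(1)$, so $y_E\ge y_F(1)$ at the release instant. But $y_F(1)>y_L(0)$ by hypothesis, hence $y_E>y_L(0)$, contradicting the firing condition $y_E<y_L(0)$. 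Since nothing above referred to the initial data, no singular synchronous periodic solution exists for any choice of starting positions.

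I expect the main obstacle to be the bookkeeping behind Facts~1 and~2: one must track, at each fast jump and each switch, which branch of which cubic each population sits on and the resulting instantaneous values of $s_J$ and $s_{tot}$, and one must verify the monotone-convergence behavior of the slow flows on the various branches (these follow from the cubic-nullcline and monotone $y$-nullcline hypotheses on $(f,g)$ and $(f_J,g_J)$, together with excitability of both cell types). Once that inventory is in place, the single quantitative input $y_F(1)>y_L(0)$ is what yields the contradiction.
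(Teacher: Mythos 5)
Your overall strategy (contradiction, the observation that the both-silent state is absorbing, and a classification of the switching events that could fire an $E$-cell) is a legitimate and in some ways cleaner skeleton than the paper's argument, which is an exhaustive case analysis over the four possible initial phase configurations. However, your final step contains a decisive error: the firing condition upon release of inhibition is reversed. Since $C_1$ lies above $C_0$, an $E$-cell sitting on the left branch of $C_1$ at height $y_E$ jumps \emph{up} to the right branch of $C_0$ when inhibition switches off precisely when $y_E>y_L(0)$, for then there is no left branch of $C_0$ at that height to land on; if $y_E<y_L(0)$ it simply lands on the left branch of $C_0$ and stays silent. The hypothesis $y_F(1)>y_L(0)$ is exactly the condition that makes escape \emph{possible} (see the remark following Theorem~\ref{theorem_same_initial}), not an obstruction to it. So your chain $y_E\ge y_F(1)>y_L(0)$, even if it held, would show that the $E$-cells \emph{do} fire at the release instant, and no contradiction results. (There are also orientation slips feeding into this: the ordering on each cubic is $y_R<y_F<y_L$, and after a jump-down at a right knee the slow flow moves the cell \emph{up} the left branch toward $y_F$, so at the release instant one can only conclude $y_E<y_F(1)$, not $y_E\ge y_F(1)$.)

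The actual content of the theorem is therefore not that an $E$-cell can never fire on release --- it can, once --- but that with $\tau_J=\tau_E=0$ the release of inhibition coincides exactly with the instant $J$ jumps down, so after that single firing the cycle cannot close: $J$, now silent, is excited only while the $E$-cells are active, and when the $E$-cells next reach their right knee they jump down onto the left branch of $C_0$ (because $s_J=0$ at that moment) and converge to $y_F(0)<y_L(0)$ without reaching the left knee, while $J$ loses its excitation and converges to rest; this is precisely the absorbing both-silent state of your Fact~2. The missing delay is what would let the $E$-cells land on $C_1$ instead and climb past $y_L(0)$ before inhibition is actually withdrawn. To repair your proof you would keep Facts~1 and~2, correct the firing condition, and add one further step showing that a release-induced firing necessarily leads, one active phase later, to the simultaneously-silent configuration --- which is essentially what the paper's Cases~2--4 verify trajectory by trajectory.
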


\begin{proof}
We assume that the positions of $E$-cells are identical and show that no periodic solution exists as both the $E$-cells and $J$-cell always converge to their respective equilibrium points. We divide the proof into four cases, depending on the initial conditions, i.e.,~the starting points of the two cell types.

{\bf Case 1:} Suppose that both the $E$-cells and $J$-cell start in the
silent phase, i.e., the $E$ cells lie on the left branch of the $s_J=0$
nullcline and the $J$ cell lies on the left branch of the $s_{tot}=0$ nullcline.
Then both populations evolve according to their intrinsic dynamics (i.e.,~no coupling) and they will stay on their respective left branches and evolve towards their respective equilibrium points.

{\bf Case 2:} Suppose that the $E$-cells start in the active phase and the $J$
cell starts in the silent phase, i.e., the $E$ cells lie on the right branch
of $s_J=0$ nullcline and the $J$-cell on the left branch of the $s_{tot}=1$ nullcline (points $P_0$ and $Q_0$ in Figure~\ref{nullclines_nodelay2}). Then the $E$-cells will follow the right branch until they reach
the right knee at $P_1$ as shown in Fig.~\ref{nullclines_nodelay2}A, while the $J$ cells follow the left branch to the point $Q_1$. At this point the $E$-cells will jump down to the silent phase.  As the $J$-cell is already in the silent 
phase, the $E$-cells will jump down to the left branch of the $s_J=0$ nullcline, 
point $P_2$.  As soon as the $E$-cells cross the threshold $x$-value, excitation to the $J$-cell will turn off and the $J$-cell will jump to the left branch of the $s_{tot}=0$ nullcline, point $Q_2$.  We are now in Case 1.

\begin{figure} [htb!]
\centering
\includegraphics[height=52mm, width=58mm]{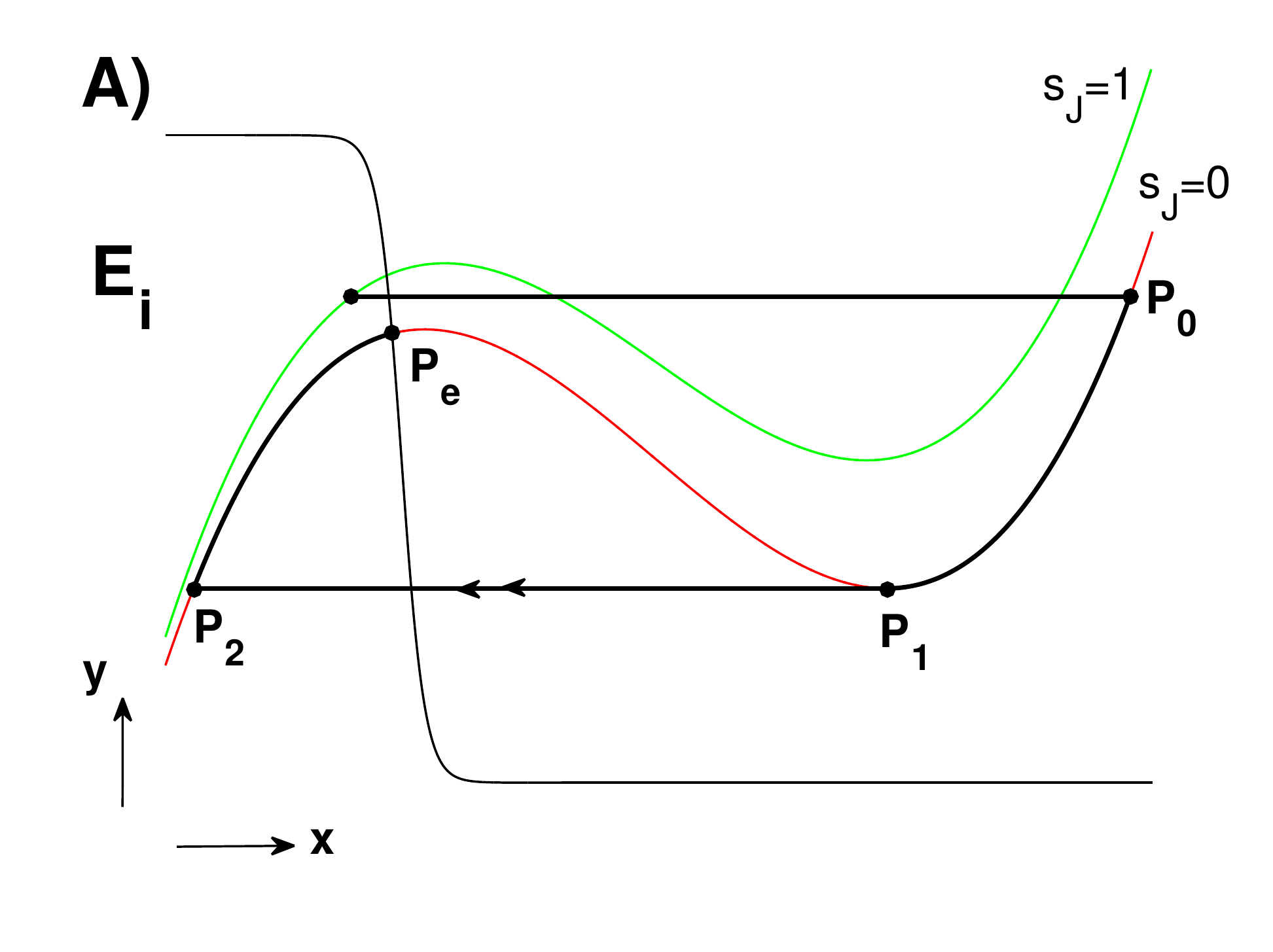}
  \includegraphics[height=52mm, width=58mm]{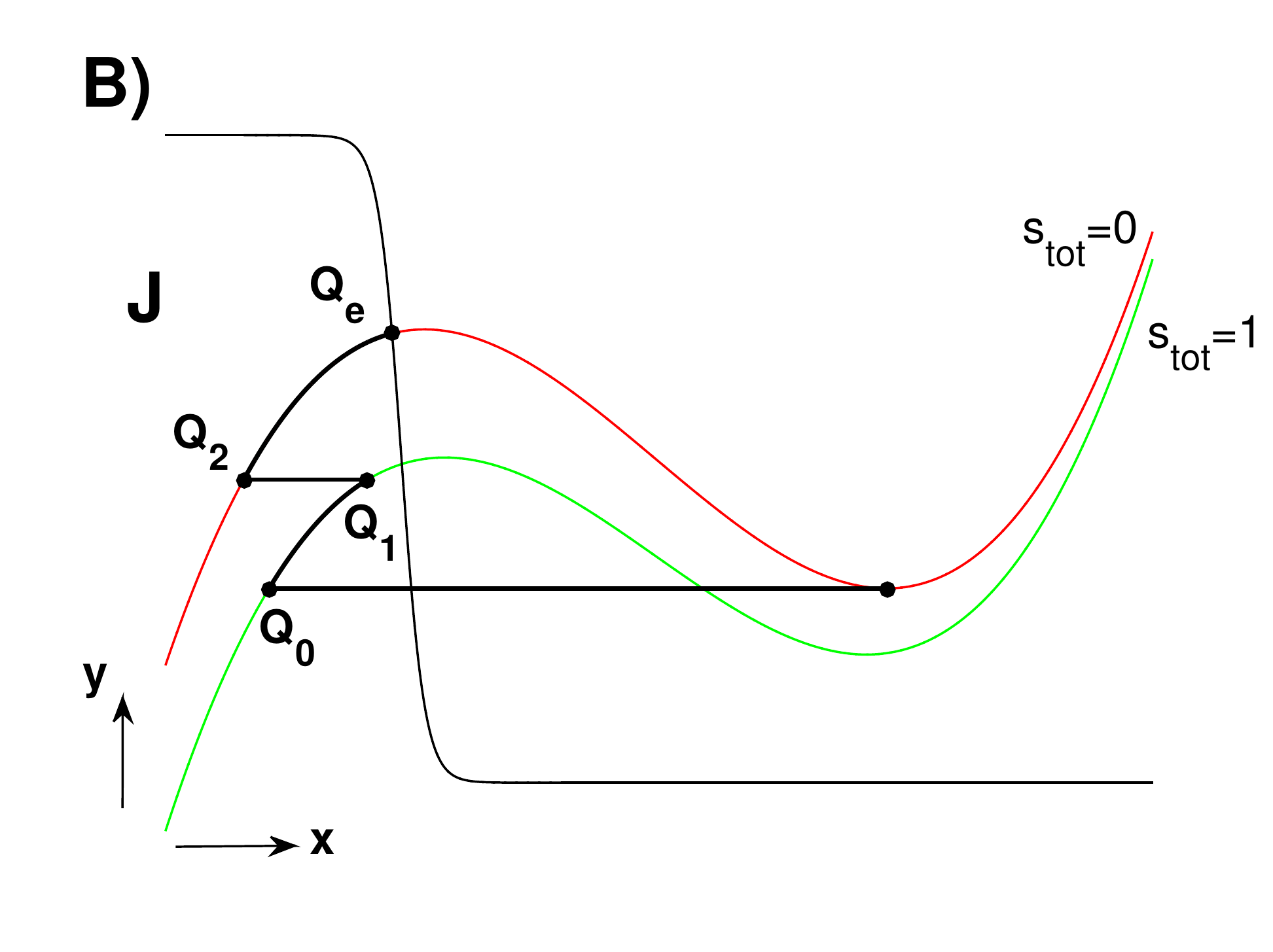}
  \caption{Nullclines for A) $E$-cells and B) $J$-cell in a globally inhibitory network. The solid lines, and points $P_i$ and $Q_i$ correspond to the singular synchronous solution constructed in the text. The double arrows on the solid lines indicate the fast jumps between the silent and active phases. 
The trajectories shown are for the case when the $E$-cell population starts in the active
phase, the $J$-cell population starts in the silent phase and there is no delay in the synapses.} 
  \label{nullclines_nodelay2}
\end{figure}

{\bf Case 3:} Suppose that both the $E$-cells and the $J$-cells start in the
active phase, i.e., the $E$-cells lie on the right branch of the  $s_J=1$
nullcline and the $J$-cell on the right branch of the $s_{tot}=1$ nullcline.
There are a number of possible solution trajectories for the cells, depending
on (i) which cell type reaches the right knee of its respective nullcline
first and (ii) the position the cell is in when the other cell type jumps down to the silent phase.  

One possible solution trajectory set is illustrated in Figure~\ref{nullclines_nodelay}, where $P_0$ and $Q_0$ correspond to their respective starting points. This figure corresponds to the case
where the $E$-cells jump down first. In the figure, the cells evolve
to points $P_1$ and $Q_1$, respectively, then $E$-cells jump down to $P_2$
on the left branch of the $s_J=1$ nullcline. As this occurs, the $J$-cell 
jumps to the left branch of the $s_{tot}=0$ curve, point $Q_2$. The $J$-cell follows the nullcline 
to the right knee, $Q_3$, and then jumps down to its silent phase. In the case shown in
Figure~\ref{nullclines_nodelay}A, when this occurs the $E$-cell lies at
point $P_3$ {\it below} the left knee of the $s_J=0$ nullcline. Thus
when the $J$-cell jumps down to $Q_4$ on the $s_{tot}=0$ nullcline, the $E$-cell
jumps to $P_4$ on the $s_J=0$ nullcline and we are in Case 1.

\begin{figure} [htb!]
\centering
\includegraphics[height=52mm, width=58mm]{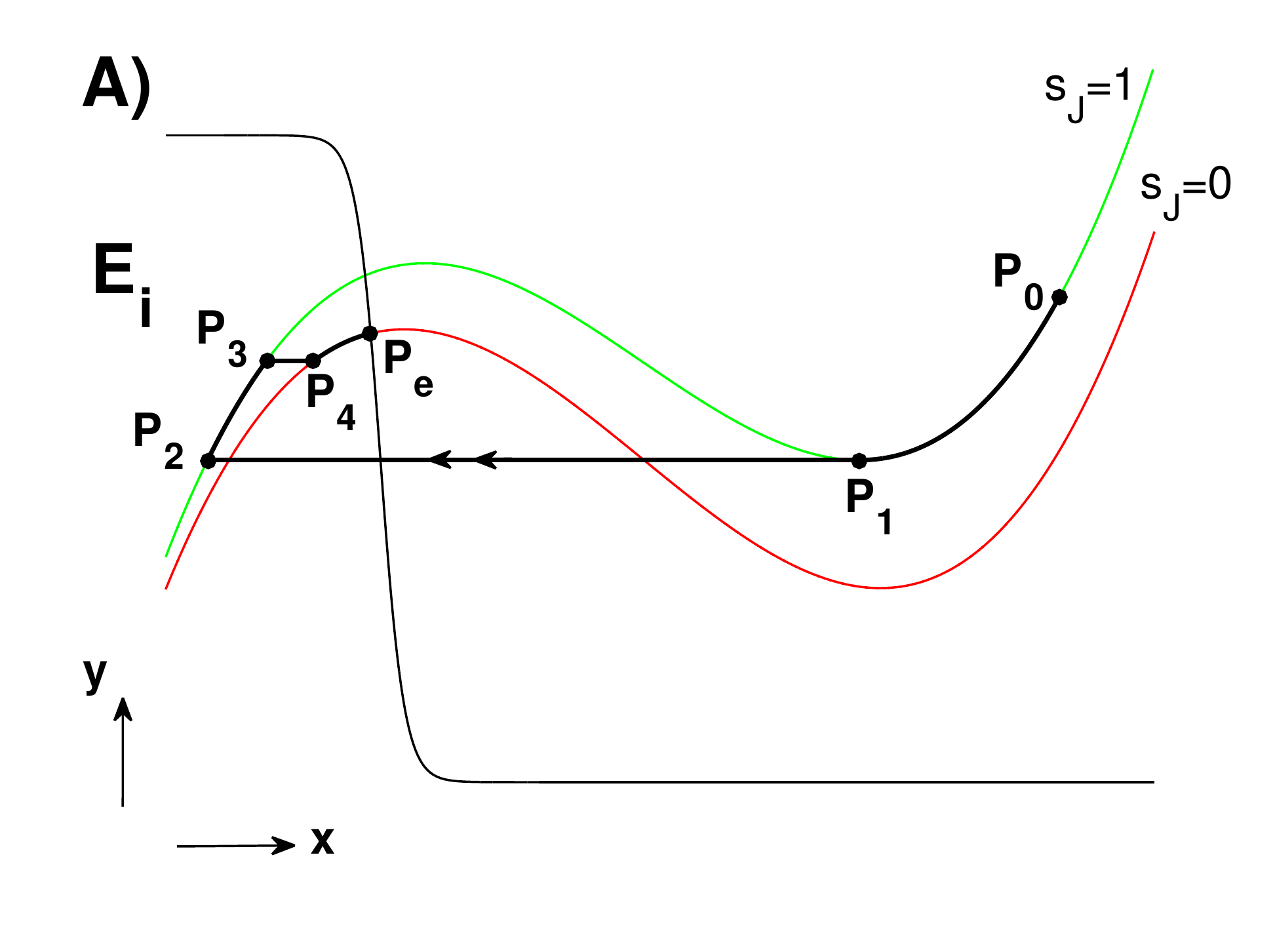}
  \includegraphics[height=52mm, width=58mm]{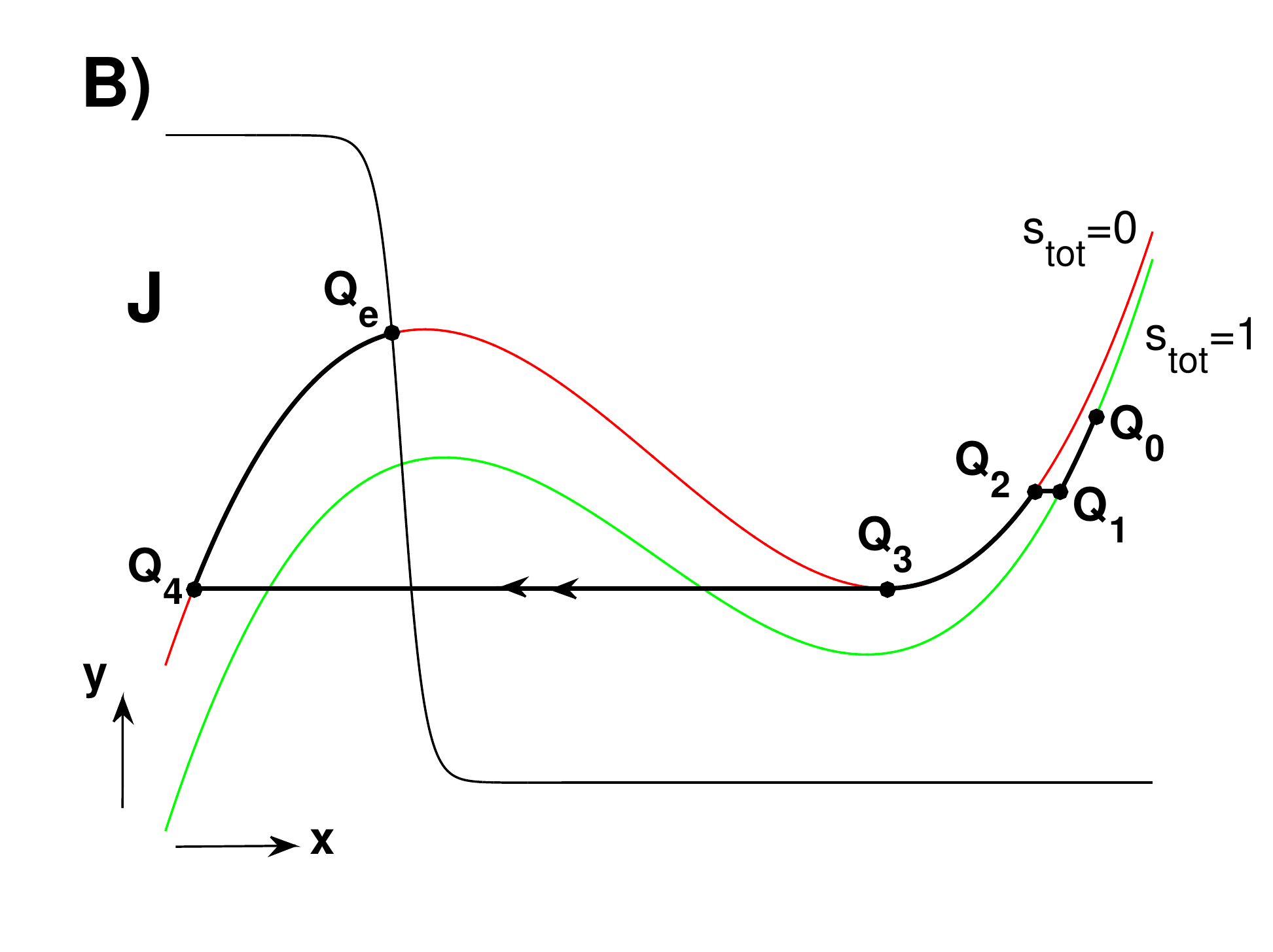}
  \caption{Plots of solution trajectories for A) $E$-cells and B) $J$-cell in black solid lines, approaching to their respective equilibrium points, if two populations start in active phase and there is no delay in the synapses. These plots are for the particular case where the $E$-cells lie below the left knee of the $s_J=0$ cubic in red curve of A) when the $J$-cell jumps down.}
  \label{nullclines_nodelay}
\end{figure}

A second solution trajectory set is shown in Figure~\ref{nullclines_nodelay1}. The only 
difference in this situation is that the point $P_3$ lies {\it above} the left knee of the 
$s_J=0$ nullcline. Thus when the $J$-cell jumps down from the active to the silent phase,
$Q_3$ to $Q_4$ on Fig.~\ref{nullclines_nodelay1}B, the $E$-cells jump up from the silent to the active phase as they are released from inhibition. This means that $Q_4$ is on the left branch of the $s_{tot}=1$ nullcline and $P_4$ is on the right branch of the $s_J=0$ nullcline. The two cells move along their respective nullclines until the $E$-cell reaches the right knee, $P_5$ on the Fig.~\ref{nullclines_nodelay1}A. The $E$-cells then jump down to $P_6$ on the left branch of the $s_J=0$ nullcline. This causes the $J$-cell to jump to $Q_6$ on the left branch of the $s_{tot}=0$ nullcline and we are in Case 1. 

\begin{figure} [htb!]
\centering
\includegraphics[height=52mm, width=58mm]{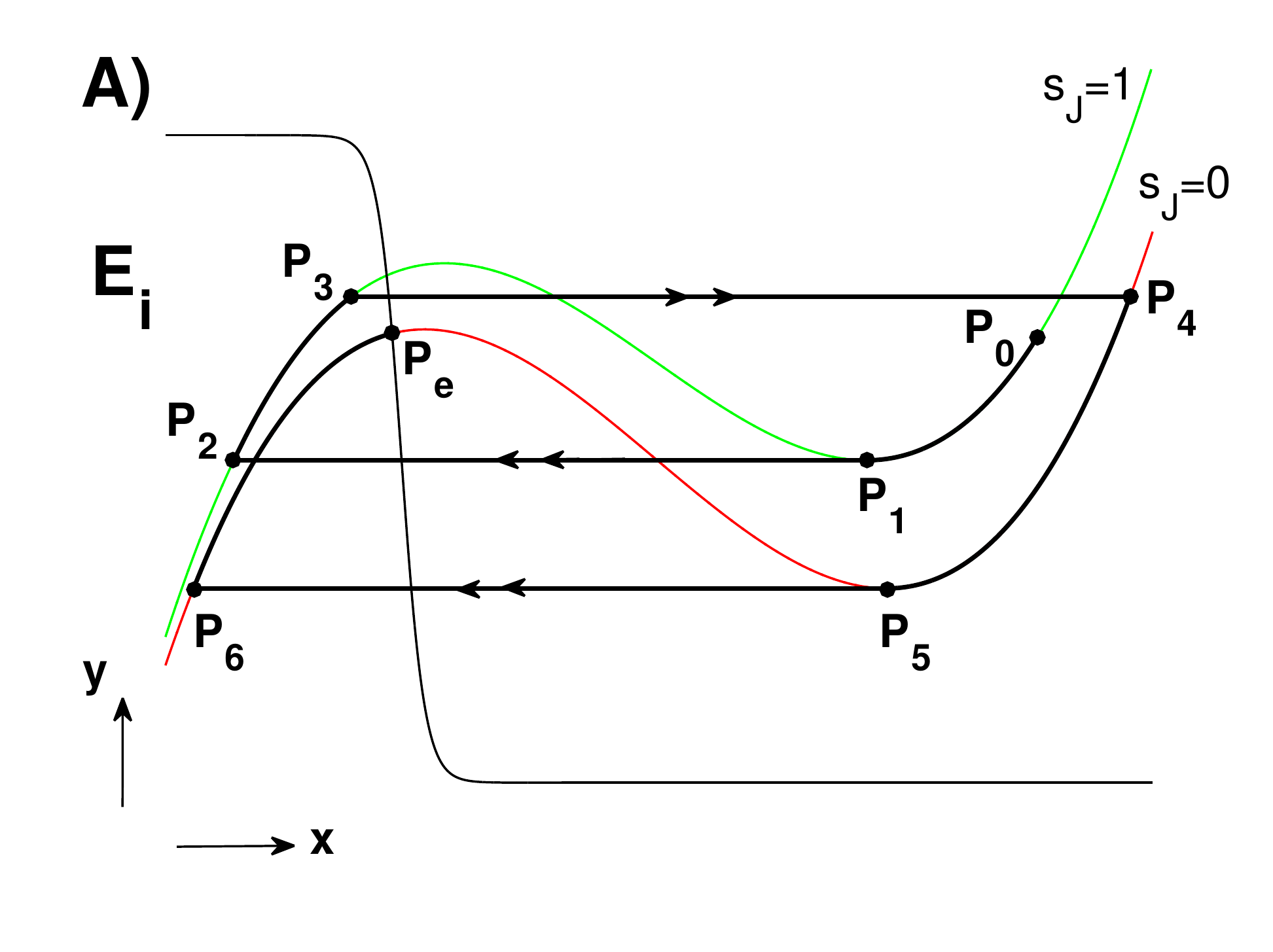}
  \includegraphics[height=52mm, width=58mm]{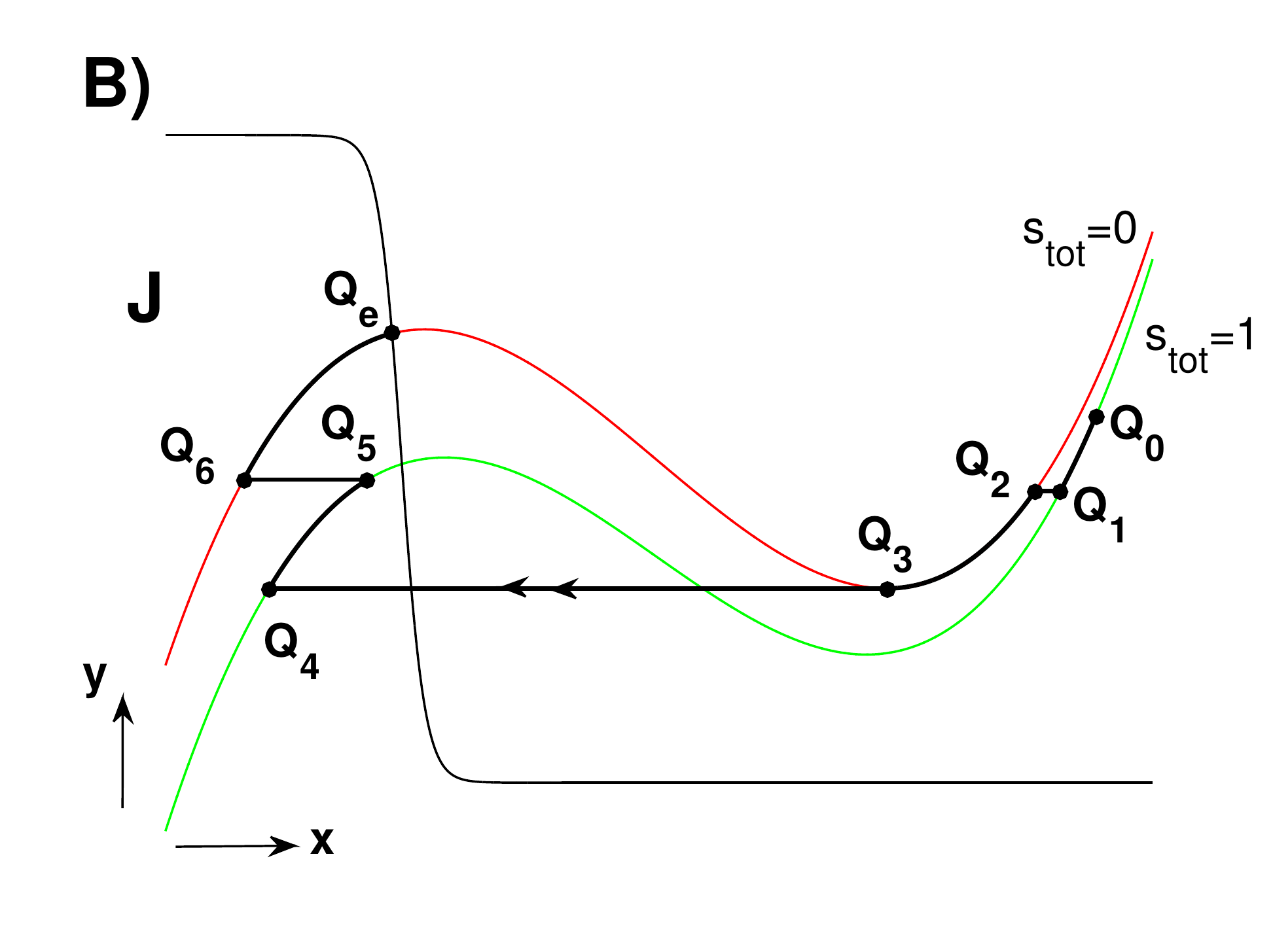}
  \caption{Plots of solution trajectories for A) $E$-cells and B) $J$-cell in black solid lines, approaching to their respective equilibrium points, if two populations start in the active phase and there is no delay in the synapses. These plots are for the case where $E$-cells lie above the left knee of the $s_J=0$ cubic, corresponding to the red curve of A), when $J$ jumps down.} 
  \label{nullclines_nodelay1}
\end{figure}

The previous situations occurred when the $E$-cells jumped down to the silent phase first. Now we consider the situation when the the $J$-cell jumps down first. The two cells start on the right branches of the $s_J=1$ and $s_{tot}=1$ nullclines, respectively, as in the previous cases. Here, however, the $J$-cell reaches the right knee of the $s_{tot}=1$ nullcline 
and then jumps down to the left branch of the $s_{tot}=1$ nullcline. This causes the $E$-cells to jump from the right branch of the $s_J=1$ nullcline to the right branch of the $s_J=0$ nullcline. The two populations travel on their respective nullclines until the $E$-cells reach the right knee and jumps to the left branch of the $s_J=0$ nullcline. This causes the $J$-cell to jump to the left branch of the $s_{tot}=0$ nullcline. Once again we are in Case 1.

{\bf Case 4:} The final case is where the the $J$-cell starts in the active phase and the
$E$-cells in the silent phase. The starting points are on the right branch of the $s_{tot}=0$ nullcline and on the left branch of the $s_J=1$ nullcline, respectively. The two populations travel on their respective nullclines until the $J$-cell reaches the right knee of the $s_{tot}=0$ nullcline and jumps down to the silent phase. Let $P_1$ be the position of the $E$-cells when the $J$-cell jumps down. There are two possibilities. If $P_1$ lies below the left
knee of the $s_J=0$ nullcline, then as the $J$-cell jumps down to the silent phase, 
the $E$-cells will jump to the left branch of the $s_J=0$ nullcline. This means that 
the $E$-cells will remain in the silent phase, so the $J$-cell will jump down to the left branch 
of the $s_{tot}=0$ nullcline. That is, both cells will be in the silent phase and we are 
in Case 1. The second possibility is that $P_1$ lies above the left knee of the $s_J=0$ nullcline.
In this case, as the $J$-cell jumps down to the silent phase the $E$-cells will jump up to the
active phase. Thus the $J$-cell will jump to a point on the left branch of the $s_{tot}=1$ nullcline and the $E$-cells to a point on the right branch of the $s_J=0$ nullcline. This means we are now in Case 2.
\end{proof}

\subsection {Longer active phase for $J$-cell}\label{subsec_longerJ}
In this section, we consider the case where $J$-cell has a sufficiently long active phase compared to the $E$-cells, and prove the existence and stability for synchronous periodic solutions when delays are present. 

\begin{theorem}
\label{theorem_same_initial}
A singular synchronous periodic solution exists with non-zero $\tau_J$ and $\tau_E$ if \\
(i) $y_F(1)>y_L(0)$, \\
(ii) the active phase of the $J$-cell is sufficiently long,\\
(iii) the delay $\tau_J$ is sufficiently large, and\\
(iv) the populations have overlapping active phases.
\end{theorem}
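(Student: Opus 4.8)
The plan is to construct one singular periodic orbit explicitly --- by concatenating slow-subsystem arcs with fast jumps while tracking the two delayed switching times --- and then to verify that this construction is self-consistent under (i)--(iv). First I would pass to the synchrony ansatz: the two $E$-cells are identical, mutually uncoupled, and receive the same inhibition $s_J$, so the set $\{x_1=x_2,\,y_1=y_2\}$ is invariant, and it suffices to produce a periodic solution of the reduced switched delay system in $(x_E,y_E,x_J,y_J)$ with $s_{tot}=s_E$. I write $C_s$ for the cubic $f(x,y)-g_{inh}s(x-x_{inh})=0$, so $E$ rides the branches of $C_0$ while $J$ is silent and of $C_1$ (which lies above $C_0$) while $J$ is active, and $D_s$ for the corresponding $J$-cubic, with $D_1$ below $D_0$; on any branch the slow evolution is one of the reduced systems of the form \eqref{slow_left}--\eqref{slow_synaptic}, governed by the vector fields $G_L$ and its active-branch and $J$-cell counterparts.

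Next I would lay out the candidate orbit, setting $t=0$ at the instant $E$ jumps up (fires) with $J$ silent. (A) $E$ jumps onto the right branch of $C_0$ and moves toward its right knee; at $t=\tau_E$ the excitation reaches $J$, and since $x_J-x_{exc}<0$ this drives $J$ up, so $J$ jumps onto the right branch of $D_1$. (B) At $t=\tau_E+\tau_J$ the inhibition from $J$ reaches $E$, switching its cubic $C_0\to C_1$; $E$ reaches a right knee, jumps down, and --- still feeling the inhibition --- ends up on the left branch of $C_1$, evolving toward the fixed point $(x_F(1),y_F(1))$. (C) A time $\tau_E$ after $E$ jumped down, $J$ loses its excitation and switches $D_1\to D_0$; $J$ moves up the right branch of $D_0$ to its right knee and jumps down to its silent phase at a time $t^*$. (D) A time $\tau_J$ after $t^*$, $E$ finally senses the release and its cubic switches $C_1\to C_0$; since $y_F(1)>y_L(0)$, and (by (ii)--(iii)) enough time has passed that $y_E>y_L(0)$ at this instant, $E$ no longer sits on a left branch but in the region where $\dot x_E>0$, so it jumps up and the network returns to the configuration of (A). From this description the period $T$ is read off as the sum of the slow transit times along the silent and active arcs in (A)--(D) together with the delay contributions, written through $G_L$, its counterparts, and the knee and fixed-point heights.

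To complete the existence proof I would check three things: that the switching events in (A)--(D) actually occur in the stated order for the given parameter ranges; that each fast jump lands on the branch claimed; and that the loop closes, i.e.\ the induced return map on the relevant $y$-coordinates has a fixed point. The third point is essentially automatic in the singular limit, because every jump-down resets $y_E$ and $y_J$ to fixed knee heights, after which the slow flow together with the fixed delays forces the configuration at the next ``$E$ fires'' event. The hypotheses enter as follows. Condition (iv) supplies the temporal overlap that lets $J$ get excited while $E$ is still active in (A)--(B). Condition (ii), the long $J$-active phase, ensures that $J$ is still active when $E$ jumps down in (B), so $E$ ends up on the \emph{upper} cubic $C_1$ rather than on $C_0$ --- precisely the feature absent in the no-delay dynamics of Theorem~\ref{theorem_nodelay_longJ}, where $E$ always jumps down below $y_L(0)$ and is trapped. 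Conditions (i)--(iii) together force $y_E>y_L(0)$ at the release instant in (D): (i) places the fixed point of $C_1$'s left branch above $y_L(0)$, while the long $J$-active phase (ii) and the large delay (iii) --- the latter giving $E$ an extra window of length $\tau_J$ after $t^*$ during which it still ``sees'' $s_J=1$ --- make the release late enough for $y_E$ to have risen above $y_L(0)$; hence the release triggers a firing and the orbit closes.

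The step I expect to be the main obstacle is this last quantitative one: converting ``sufficiently long'' and ``sufficiently large'' into explicit thresholds and proving the resulting chain of inequalities among the delay-dependent switching times, so that the event ordering is pinned down and $y_E>y_L(0)$ is guaranteed at release. Once the ordering is in hand, the remainder is routine Terman-type bookkeeping of slow arcs and fast jumps, and the period formula follows by adding up arc transit times. The brief stability statement promised for this case would then come from the usual compression argument: because each jump-down collapses $E$ (and $J$) onto knee heights that do not depend on the incoming phase, a Poincar\'e-type map on the $y$-coordinates of the two $E$-cells --- equivalently, on their phase difference --- contracts on each cycle, so the synchronous orbit is asymptotically stable.
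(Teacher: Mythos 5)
Your proposal is correct and follows essentially the same route as the paper: construct the singular orbit event by event on the cubics $C_0,C_1$ and their $J$-cell counterparts, with hypotheses (i)--(iv) playing exactly the roles you identify, and with the remaining work being the quantitative event-ordering (which the paper carries out by introducing the escape time $\tau_{esc}$, splitting into three subcases for where $J$ sits when excitation turns off, and showing all yield the bound $\tau_{esc}+\tau_E^*<\tau_a^J$). The only caveats are minor: in the singular limit the fast jumps are horizontal in $y$, so a jump-down triggered by a synaptic switch (rather than by reaching a knee) does not reset $y$ to a knee height, and the paper's stability compression comes from the velocity gradient along the branches rather than from any such reset.
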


\begin{remark}
As noted in the proof of \cite{RUBIN02}, the condition $y_F(1)>y_L(0)$ indicates that the fixed point of the left branch of $C_1$ lies above the left knee of $C_0$. This makes it possible for $E$-cells to fire when they are released from inhibition.
\end{remark}

\begin{proof}
We prove the existence of a singular synchronous solution by constructing such a solution if the hypotheses of Theorem~\ref{theorem_same_initial} are satisfied. The number of $E$ oscillators in the network is set to two, but this construction easily generalizes to an arbitrary number. We assume the positions of the two $E$-cells are identical throughout the construction. The singular trajectory is shown in Figure~\ref{nullclines}.

\begin{figure} [htb!]
\centering
  \includegraphics[height=52mm, width=58mm]{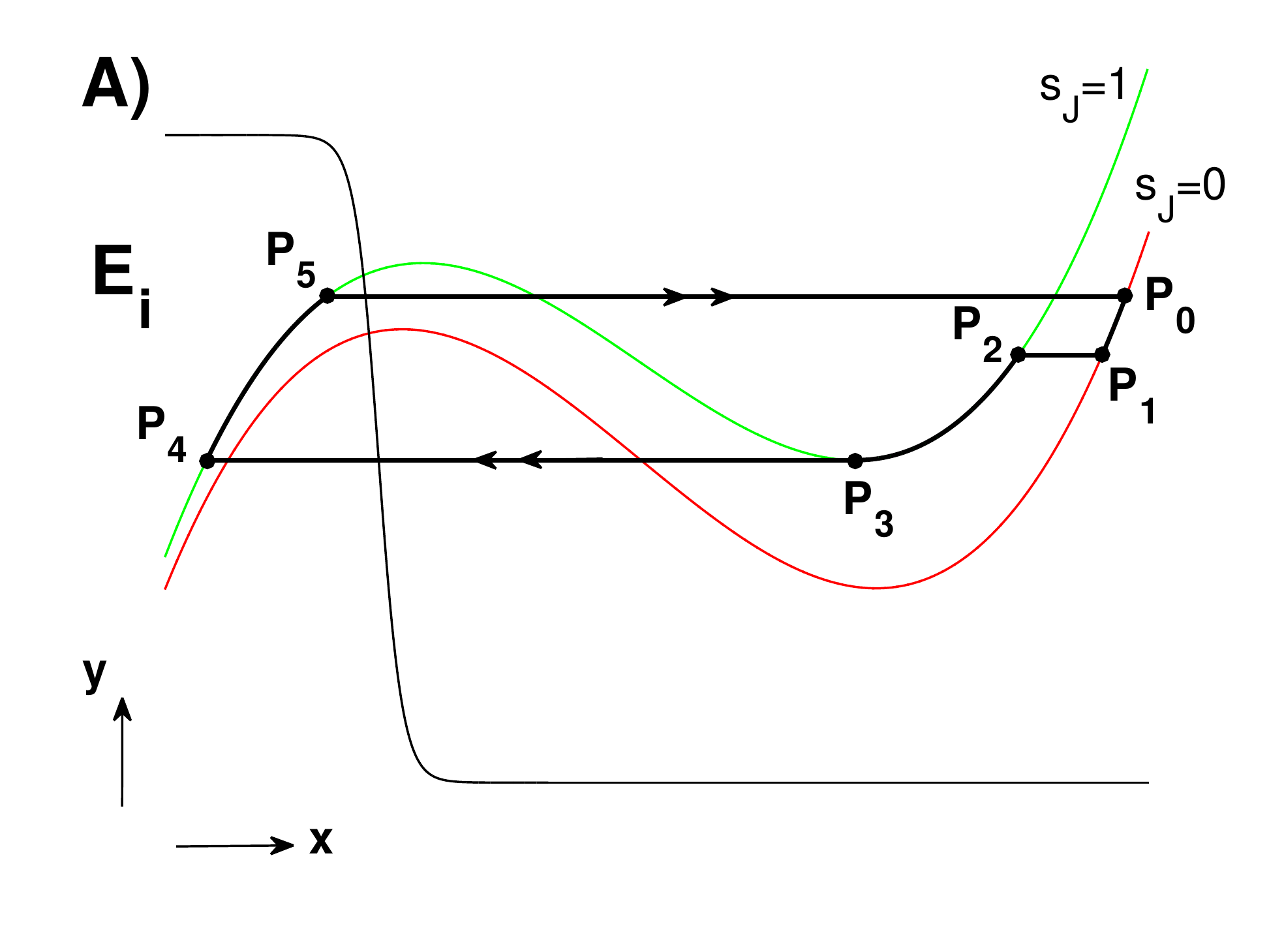}
  \includegraphics[height=52mm, width=58mm]{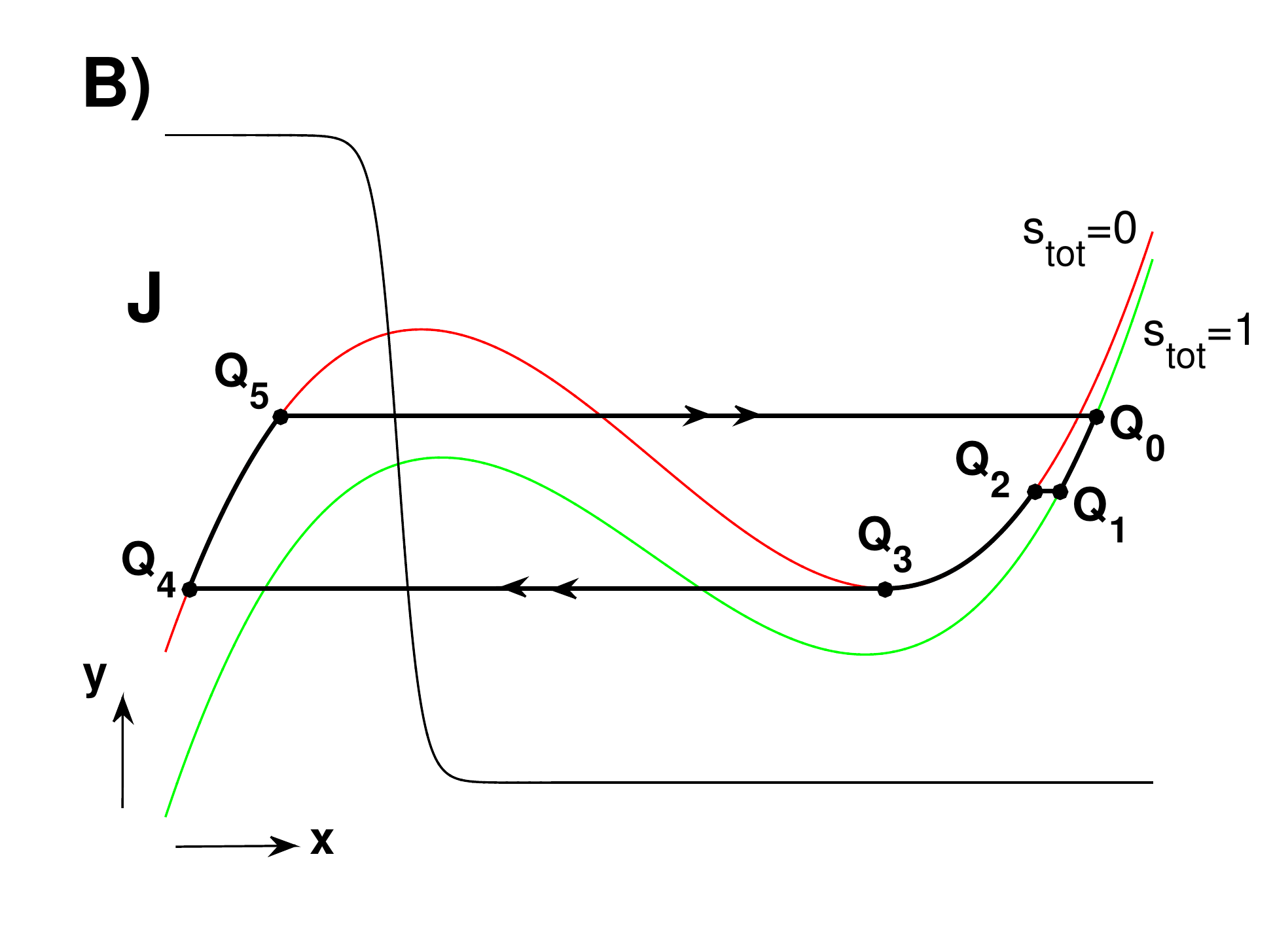}
  \caption{Plots of trajectories for A) $E$-cells and B) $J$-cell in black solid lines, if the two populations have overlapping active phases and there are delays in both synapses. Points $P_i$ and $Q_i$ correspond to the singular synchronous solution constructed in the text.}
\label{nullclines}
\end{figure}

We begin with the $E$-cells having just jumped to the active phase on the right branch of the $s_J=0$ cubic, labeled $P_0$ in Fig.~\ref{nullclines}A. Due to the excitation from the $E$-cells, the $J$-cell will jump to the right branch of the $s_{tot}=1$ cubic, labeled $Q_0$ in Fig.~\ref{nullclines}B, but only after an amount of time $\tau_E$. By assumption (iv), the $E$-cells are still in the active phase when the $J$-cell jumps up. 
Since $\tau_J$ is sufficiently large, there are two possible cases for the location of the $E$-cells depending on their relative position to the right knee of the $s_J=1$ cubic, when inhibition effectively turns on: $E$-cells lie above or below the right knee. The first possible trajectory for $E$-cells is given in Fig.~\ref{nullclines}A. When inhibition turns on the $E$-cells jump from $P_1$ to $P_2$ on the adjacent right branch of the $s_J=1$ cubic, while $J$ evolves down the right branch of the $s_{tot}=1$ cubic. If the $E$-cells lie below the right knee, on the other hand, turning on of the inhibition makes $E$ immediately jump down to the left branch of the $s_J=1$ branch. Regardless of the location of $E$-cells, however, both cases lead to the same result, that is, $E$-cells jump down before $J$-cell does (see below).

We assume that the $J$-cell has a longer active phase than the $E$-cells. This condition implies that, if $E$-cells still travel the right branch of the $s_J=0$ cubic corresponding to the first case mentioned above, the $E$-cells reach the right knee $P_3$ in Fig.~\ref{nullclines}A before the $J$-cell reaches the right knee of the $s_{tot}=1$ curve. Thus, at the time when $E$ jumps down to the left branch, labeled $P_4$ on the left branch of the $s_J=1$ cubic, $J$ lies above the right knee of the $s_{tot}=0$ cubic. Due to the time delay in the excitatory synapses, the turn-off of excitation to the $J$-cell does not follow immediately. 

Depending on the size of time delay $\tau_E$ compared to the remaining time for $J$ in the active phase, we have the following three cases, whose trajectories for the $J$-cell are given in Fig.~\ref{nullclines}B and Fig.~\ref{nullclines_J} : i) If the $J$-cell still lies above the right knee of the $s_{tot}=0$ curve after $\tau_E$ time, labeled $Q_1$ in Fig.~\ref{nullclines}B, the $J$-cell first jumps to the point $Q_2$ along the $s_{tot}=0$ cubic. Then, the $J$-cell moves down the right branch of $s_{tot}=0$ cubic while $E_i$ moves up the left branch of the $s_J=1$ cubic. When the $J$-cell reaches the right knee $Q_3$, it jumps down to the point $Q_4$ on the left branch of the $s_{tot}=0$ cubic; ii) If the $J$-cell lies below the right knee of the $s_{tot}=0$ cubic, $\tau_E$ time after the $E_i$ jumps down to $P_4$, the $J$-cell would immediately jump down to the left branch of the $s_{tot}=0$ cubic, as shown in Fig.~\ref{nullclines_J}A; iii) If the $J$-cell reaches the right knee of the $s_{tot}=1$ cubic within the $\tau_E$ time window, after the $E_i$ jump down, the $J$-cell first jumps down to the left branch of $s_{tot}=1$ and then moves upwards until the turn-off of excitation becomes effective. When this happens, the $J$-cell jumps to the left branch of the $s_{tot}=0$ cubic and starts to move upwards, which is shown in Fig.~\ref{nullclines_J}B. Note that, regardless of the location of the $J$-cell along the right branch of the $s_{tot}=1$ cubic after the $\tau_E$ time, $J$-cell eventually reaches the left branch of the $s_{tot}=0$ cubic while $E_i$ travels upwards along the left branch of the $s_J=1$ cubic. 

\begin{figure} [htb!]
\centering
\includegraphics[height=52mm, width=58mm]{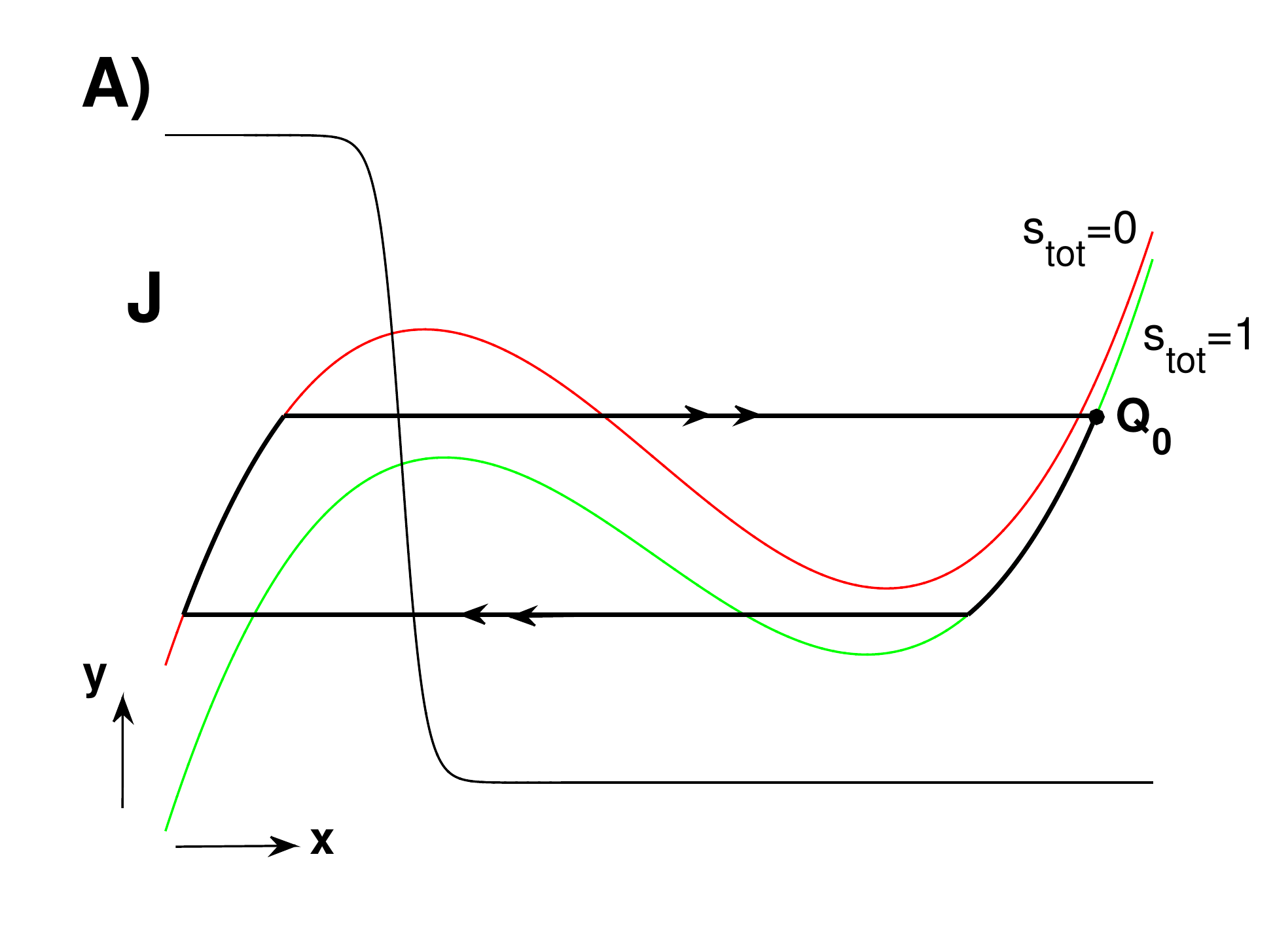}
  \includegraphics[height=52mm, width=58mm]{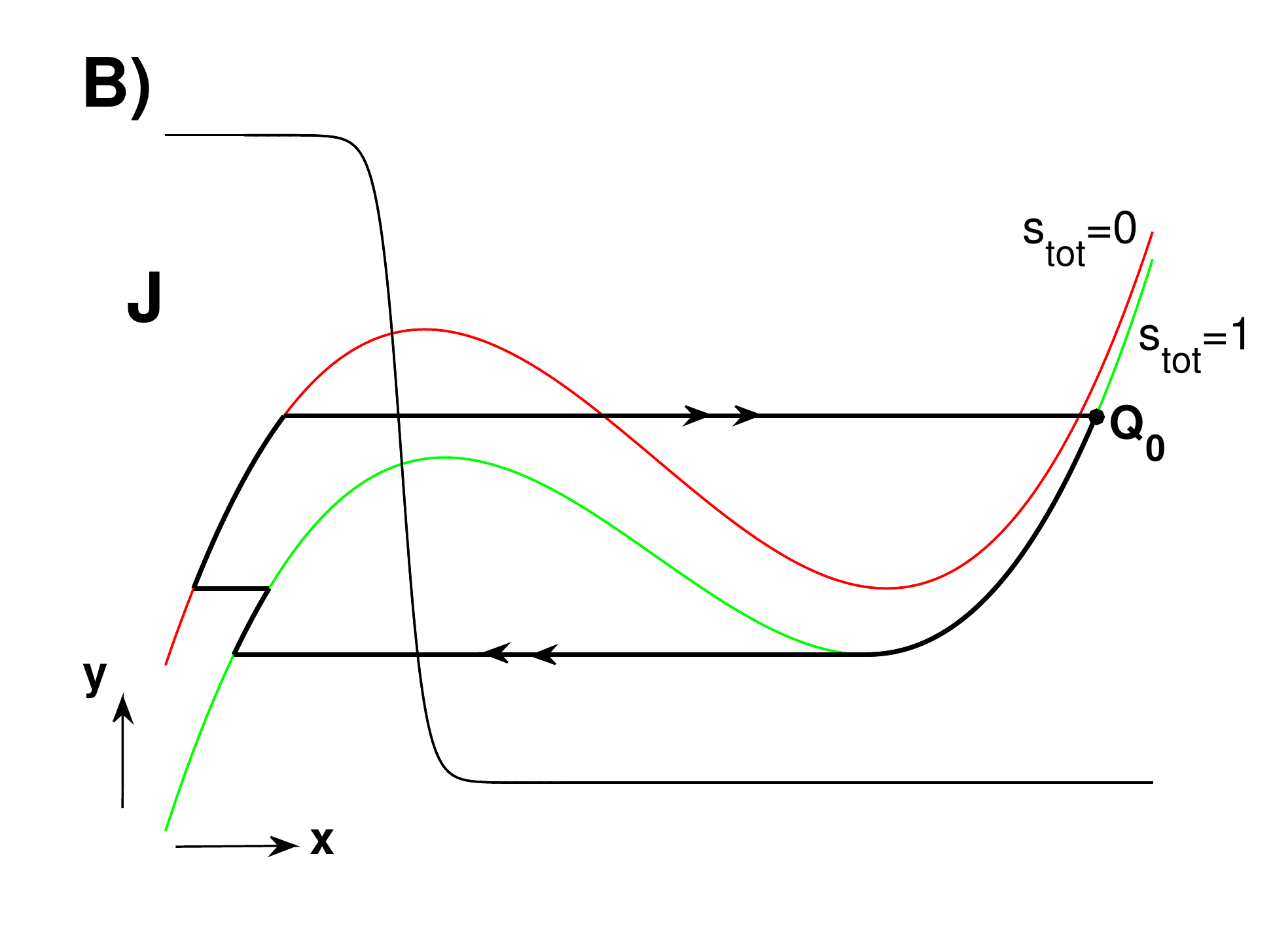}
  \caption{Plots of two different trajectories in $x$-$y$ phase plane, depending on the $J$-cell's position along the right branch of the $s_{tot}=1$ cubic when excitation to the $J$-cell turns off.}
  \label{nullclines_J}
\end{figure}

Now the inhibition to $E$-cells starts to turn off. However, due to the time delay $\tau_J$ in inhibition, this turn-off will not occur immediately. This means that the $E$-cells do not jump to the left branch of the $s_J=0$ cubic but continue to move up that of the $s_J=1$ cubic instead. Moreover, because of sufficiently large inhibitory delay $\tau_J$, the $E$-cells are able to reach the point above the left knee of the $s_J=0$ cubic, and to jump up to the right branch of the $s_J=0$ cubic again when they are finally released from inhibition. After the $E$-cells jump up, there is a delay of $\tau_E$ until turn-on of excitation becomes effective. However, even if $\tau_E=0$, due to the sufficiently large $\tau_J$, the $J$-cell lies above the left knee of the $s_{tot}=1$ cubic when excitation turns on. Therefore, it also jumps to the right branch of the $s_{tot}=1$ cubic and returns to its starting point, $Q_0$.

Existence of the synchronous solution requires that the $E$-cells can reach the point above the left knee of the $s_J=0$ cubic and escape from the silent phase when inhibition from the $J$-cell turns off, despite the fact that each cell is excitatory. Rubin and Terman in~\cite{RUBIN02} showed that this is indeed the case if there exists an additional slow variable which governs the rate of synaptic coupling term, $\dot s_J$. However, our model does not include an additional equation corresponding to this rate; the synaptic effect is modeled as a direct function of the corresponding $x$-variables. We demonstrate that the existence of the time delay in the synapses replaces this additional slow variable needed for the existence of the synchronized oscillations in $E$-cells.

As Rubin and Terman showed the lower bound on the duration of the $J$-cell active phase such that the synchronous solution can exist, we give similar estimates on this bound. The singular synchronous solution exists if $E$-cells lie in the region where $y_i>y_L(0)$ when the inhibition turns off after the $\tau_J$ time. This will follow if the active phase of the $J$-cell is sufficiently long and $\tau_J$ is sufficiently large.

Recall that $y_L(s_J)$ (or $y_R(s_J)$) is the $y$-value of the left (or right) knee of the $s=s_J$ cubic. We have $y_R(0)<y_L(0)$. Let $\tau_{esc}$ denote the time for $y$ to increase from $y_R(0)$ to $y_L(0)$ under $y'=G_L(y,1)$ on the slow subsystem. Let $\tau_a^J$ denote the duration of $J$-cell active phase. 
The synchronous oscillatory solution will exist if $\tau_{esc}$ is less that the time between
when the $E$-cells jump down to the silent phase and when the $E$-cells are released
from inhibition, i.e., the duration of the $E$-cell silent phase, $\tau_s^E$. Even though there are three possible cases for the position of the $J$-cell after $E$-cells jump down as identified above, we show that all these cases result in the same lower bound on $\tau_a^J$.

{\bf Case I:} The $J$-cell still lies above the right knee of the $s_{tot}=0$ cubic after the $E$-cells jump down and the delay time $\tau_E$ (corresponding to Fig.~\ref{nullclines}B). 
Now $\tau_a^J$ consists of up to four different parts: (i) the first part corresponds to the time delay $\tau_J$ after $J$ has jumped up, (ii) the next part corresponds to time, denoted by $\tau_E^*$, needed for the $E$-cell to reach the right knee of the $s_J=1$ cubic after jumping from the adjacent $s_J=0$ cubic, (iii) after $E$ jumps down, $J$ spends the $\tau_E$ time before jumping to the right branch of the $s_{tot}=0$ cubic, and (iv) finally the time, denoted by $\tau_J^*$, needed for $J$ to reach the right knee of the $s_{tot}=0$ cubic. Combining all times yields $\tau_a^J=\tau_J+\tau_E^*+\tau_E+\tau_J^*$. Similarly, $\tau_s^E$ consists of three parts: (i) $\tau_E$, the time after the $E$-cells
jump down to when the $J$-cell feel excitation turn off, (ii) $\tau_J^*$ as above, and
(iii) the delay $\tau_J$ after the $J$-cell jumps down before the $E$ cell feels the turn-off of the
inhibition. Thus, the singular solution exists if $\tau_{esc}<\tau_E+\tau_J^*+\tau_J$. Replacing the right hand side in the inequality using the equation for $\tau_a^J$ and rearranging the terms, it follows that $\tau_{esc}+\tau_E^*<\tau_a^J$.

{\bf Case II:} The $J$-cell lies below the right knee of the $s_{tot}=0$ cubic when excitation turns off (corresponding to Fig.~\ref{nullclines_J}A). In this case, the $J$-cell immediately jumps down to the silent phase. This implies that $\tau_J^*=0$ in the expressions from Case I, but  we have the same lower bound: $\tau_{esc}+\tau_E^*<\tau_a^J$.

{\bf Case III:} If the $J$-cell reaches the right knee of the $s_{tot}=1$ cubic during the $\tau_E$ delay time (corresponding to Fig.~\ref{nullclines_J}B), we can break down $\tau_E$ into two parts: $\tau_{E1}$ corresponding to time needed for $J$ to reach the right knee and $\tau_{E2}$ corresponding to time needed for $J$ to travel along the left branch of $s_{tot}=1$ cubic after jumping down. Then, the $E$-cells spend time $\tau_{E1}+\tau_J$ in the silent phase as $\tau_{E2}$ is a part of $\tau_J$.  Note that $\tau_a^J=\tau_J+\tau_E^*+\tau_{E1}$. Thus, it implies the same lower bound on $\tau_a^J$ as in both cases above.
 \end{proof}
 
\paragraph{Stability} To show the stability of the synchronous solution, suppose we perturb the system such that the starting positions for the two $E$-cells are slightly different in the active phase. Since the $y$-nullcline is a monotone decreasing curve, which intersects the $x$-nullcline at its left branch as shown in Fig.~\ref{nullclines}A, the function $g$ in Eq.~(\ref{excy}) is negative above the $y$-nullcline and its magnitude increases as $y_i$ moves up the right branch of the $x$-nullcline. This implies that the speed of the $y_i$-variable decreases as $E_i$ moves down along the right branch and becomes closer to the $y$-nullcline. Thus, because the $E$ cell behind moves down faster, the distance between the two cells decreases as they move down in the active phase. 

If the active phase for the $J$-cell is sufficiently long and the delay, $\tau_J$, is sufficiently large, both $E$-cells jump down to the left branch before the $J$ cell. If both cells jump at the same time (due inhibition turning on), their relative positions will be reversed (the one behind becoming the one in front) but the distance between them will be preserved. If the cell in front reaches the right knee of the $s_J=0$ nullcline while inhibition is still off and jumps to the left branch of the $s_J=0$ nullcline, it will move more rapidly than the lagging cell and the distance between them may increase.
However, once the second cell jumps, either due to reaching the right knee of the $s_J=0$ nullcline or
due to inhibition turning on, both cells will be on the same branch. In any case, both will end
up on the left branch of the $s_J=1$ nullcline after inhibition turns on. As on the right branch the velocity in the $y$ direction decreases as the $E$-cells move up along the left branch of the $s_J=1$ nullcline. If the active phase for the $J$-cell is sufficiently long and the delay, $\tau_J$, is sufficiently large, both the $E$-cells are able to reach the point above the left knee of the $s_J=0$ cubic, when inhibition turns off, and thus will jump up to the active phase simultaneously.  If we take the perturbation between the cells sufficiently small, the compression of the trajectories due to the difference in the velocity along either branch, can compensate for any expansion during the jumps.

As Rubin and Terman noted in \cite{RUBIN02}, the domain of attraction of the synchronous solution depends on the ability of the $E$-cells to pass through the ``window of opportunity," which is provided by the delay in our model. The size of this domain grows as either the $J$-cell's active phase or the size of $\tau_J$ increases.

\begin{remark}
The analysis leads to simple formulas for the period of the synchronous solution. Let $\tau_a^J$ be, as defined above, the time for $J$-cell to spend in the active phase. After $J$-cell jumps down to the silent phase, it first spends $\tau_J$ time while $E$ moves up the left branch of the $s_J=1$ cubic to reach the point above the left knee of the $s_J=0$ cubic. Then the $J$-cell spends another $\tau_E$ time until the $J$-cell receives excitation and escapes from the silent phase. It follows that the period of the synchronous solution is simply $\tau_a^J+\tau_J+\tau_E$. $\tau_a^J$ and $\tau_J$ are determined by the dynamics of the $J$-cell, while $\tau_E$ is determined by those of the $E$-cells. 
\end{remark}

\begin{remark}
\label{remark_same_initial}
As noted in the proof of Theorem~\ref{theorem_same_initial} above, the presence of the delay $\tau_J$ in inhibition is a sufficient condition for the existence of the synchronized solution, if we assume that the two populations have overlapping active phases. However, we could omit the delay $\tau_E$ and obtain the same results. Suppose that there is no delay from $E$-cells (i.e.,~$\tau_E=0$). Then the $J$-cell lies above the right knee of the $s_{tot}=0$ cubic when the $E$-cells jump down to the left branch of the $s_J=1$ cubic, and the rest of trajectories will follow the first case where $J$ jumps from $Q_1$ to $Q_2$ and then jumps down to $Q_4$ as soon as reaching the right knee, $Q_3$, as illustrated in Fig.~\ref{nullclines}B.  
\end{remark}

Based on our analysis, one may expect that it is possible that the synchronous solution could exist in the case where there is no delay in inhibition (i.e.,~$\tau_J=0$) but there is nonzero delay in excitatory synapses. This is true, but this case requires a different condition for the relative phases of the $E$ and $J$ populations. In Theorem~\ref{theorem_same_initial}, we assume that the active phases of two populations overlap. In order for the synchronous solution to exist without delay in inhibition, we need to assume the opposite. That is, when the $E$-cells are in the active phase the $J$-cell are in the silent phase and vice versa. Then, the synchronous solution can exist in the absence of delay from the $J$-cell. The proof of this case is given below.

\begin{corollary}
\label{corollary_opposite}
Suppose that there is no delay from the $J$-cell (i.e.,~$\tau_J=0$) while having the sufficiently long delay $\tau_E$ from the $E$-cells. Then, a singular synchronous periodic solution among the $E$-cells exists if 
(i)--(ii) in Theorem~\ref{theorem_same_initial} are satisfied, and the two populations start in opposite phases.
\end{corollary}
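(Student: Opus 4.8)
The plan is to build an explicit singular synchronous solution by walking the two populations once around a periodic orbit, exactly in the style of the proof of Theorem~\ref{theorem_same_initial}, but with the two delays playing swapped roles. Now the inhibitory delay vanishes, so the $E$-cells are released from inhibition the instant the $J$-cell jumps down, and the mechanism that lets them actually fire at that moment is the one exploited by Rubin and Terman: condition (ii) gives the $E$-cells enough time on the left branch of $C_1$ to climb above the left knee of $C_0$, which is possible precisely because of condition (i), $y_F(1)>y_L(0)$. The role formerly played by a large $\tau_J$ is taken over by a large $\tau_E$, which keeps the active phases of the two populations from overlapping and simultaneously buys the $J$-cell the time it needs to prepare to fire.

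Concretely, I would start in the opposite-phase configuration: the $E$-cells silent and inhibited on the left branch of $C_1$, above $y_L(0)$, and the $J$-cell active on the right branch of its $s_{tot}=0$ cubic. The $J$-cell descends to the right knee of $s_{tot}=0$ and jumps down; since $\tau_J=0$ the inhibition turns off immediately and the $E$-cells, being above the left knee of $C_0$, jump up to the right branch of $C_0$. While the $E$-cells run through their active phase, the $J$-cell climbs the left branch of $s_{tot}=0$; because $\tau_E$ is large, the $E$-cells reach the right knee of $C_0$ and fall back to its left branch before the delayed excitation reaches the $J$-cell (so the active phases stay disjoint), and the $J$-cell rises above the left knee of $s_{tot}=1$. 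When the excitation becomes effective, a time $\tau_E$ after the $E$-cells fired, the $J$-cell jumps up to the right branch of $s_{tot}=1$; simultaneously ($\tau_J=0$) inhibition switches back on and the $E$-cells transfer to the left branch of $C_1$ and resume climbing toward $y_F(1)$. The $J$-cell then exhausts its active phase --- with the same three sub-cases as in Theorem~\ref{theorem_same_initial} for where it sits on its right branch when the delayed turn-off of excitation takes effect --- reaches the right knee of $s_{tot}=0$, and jumps down; by that instant the $E$-cells, having spent roughly the $J$-cell active phase on the left branch of $C_1$, are again above $y_L(0)$ (this is where condition (ii) re-enters), so the configuration coincides with the one we started from and the orbit closes. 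Reading off the slow-phase durations and the delays along this orbit, as in the remark after Theorem~\ref{theorem_same_initial}, then yields a closed-form period.

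The main obstacle is showing the timing constraints are mutually consistent and stable under perturbation. One must exhibit an interval of admissible $\tau_E$: it must exceed the $E$-cell active-phase duration so the two populations are never active together, and it must be long enough for the $J$-cell to clear the left knee of $s_{tot}=1$ during its silent phase --- which, as with condition (i), presupposes that the $J$-cell's intrinsic silent-phase dynamics carry its slow variable above that knee. One also checks that the $E$-cells, parked on the left branch of $C_0$ with inhibition off, cannot drift up and fire early; this is automatic, since on that branch the slow variable only approaches $y_F(0)<y_L(0)$. Finally, the three sub-cases for the $J$-cell's position at excitation turn-off must be verified to produce the same closed orbit, just as in Theorem~\ref{theorem_same_initial}, and, if stability is wanted, the velocity-compression argument along the left and right branches --- the lagging $E$-cell moving faster as it nears the $y$-nullcline --- carries over verbatim.
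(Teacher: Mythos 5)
Your construction follows the paper's own method --- a phase-plane walk around the singular orbit in which a large $\tau_E$ takes over the role that $\tau_J$ played in Theorem~\ref{theorem_same_initial} --- and it is correct, but the orbit you build differs from the paper's in one segment. In the paper's proof the $E$-cells are still on the right branch of the $s_J=0$ cubic when the delayed excitation finally fires the $J$-cell; because $\tau_J=0$ the inhibition arrives instantly and, the cells having already descended below the right knee of the $s_J=1$ cubic, it is the inhibition that terminates their active phase (the point $P_1\to P_2$ in Fig.~\ref{nullclines_opp}A). This implicitly places $\tau_E$ in a window: long enough that the $E$-cells sit below $y_R(1)$, but shorter than their intrinsic active-phase duration. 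You instead take $\tau_E$ larger than that duration, so the $E$-cells reach the right knee of $C_0$ and jump down on their own before the $J$-cell even feels the excitation; you then must (and do) account for the brief uninhibited sojourn on the left branch of $C_0$, where the cells can only drift toward $y_F(0)<y_L(0)$ and hence cannot fire early, before inhibition transfers them to the left branch of $C_1$. Both variants close up, and the escape mechanism --- conditions (i) and (ii) guaranteeing the $E$-cells climb past $y_L(0)$ during the $J$-cell's active phase --- is identical; your version buys validity for arbitrarily large $\tau_E$ at the cost of strictly disjoint active phases, while the paper's matches the trajectory actually drawn in Fig.~\ref{nullclines_opp}. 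Your added observation that the $J$-cell's silent-phase dynamics on $s_{tot}=0$ must carry it above the left knee of $s_{tot}=1$ is a hypothesis the paper uses tacitly and is worth making explicit.
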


\begin{proof}
We begin with the $J$-cell in the silent phase just after it has jumped down from the right to the left 
branch of the $s_{tot}=0$ cubic, corresponding to $Q_0$ shown in Fig.~\ref{nullclines_opp}B. The $E$-cells 
are sitting on the left branch of $s_J=1$ cubic at the point above the left knee of the $s_J=0$ cubic, 
labeled $P_3$ in Fig.~\ref{nullclines_opp}A.  Since there is no delay in the inhibitory synapses, 
the $E$-cells will jump to $P_0$ along the right branch of $s_J=0$ cubic as soon as the $J$-cell jumps down. 

\begin{figure} [htb!]
\centering
\includegraphics[height=52mm, width=58mm]{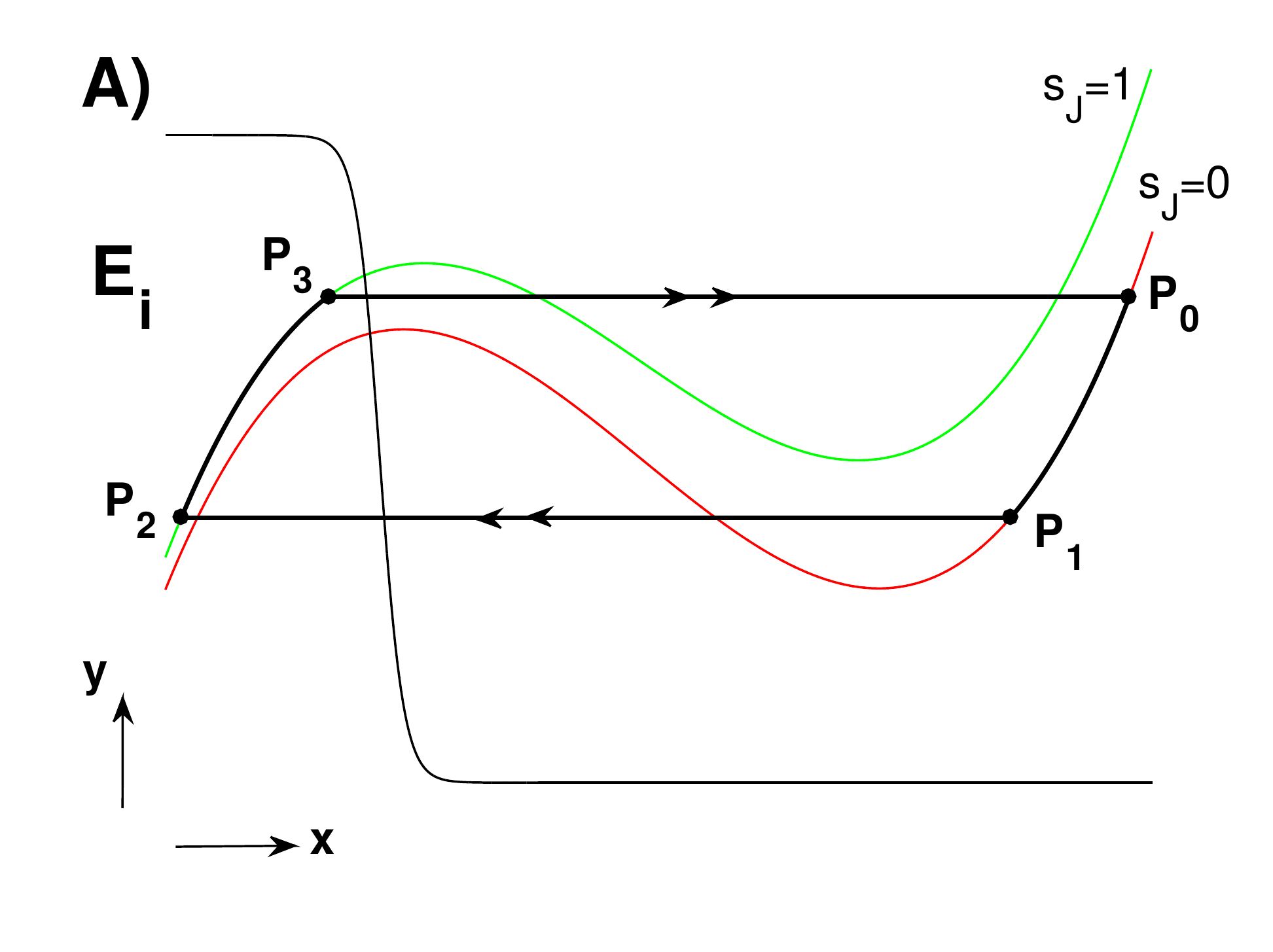}
  \includegraphics[height=52mm, width=58mm]{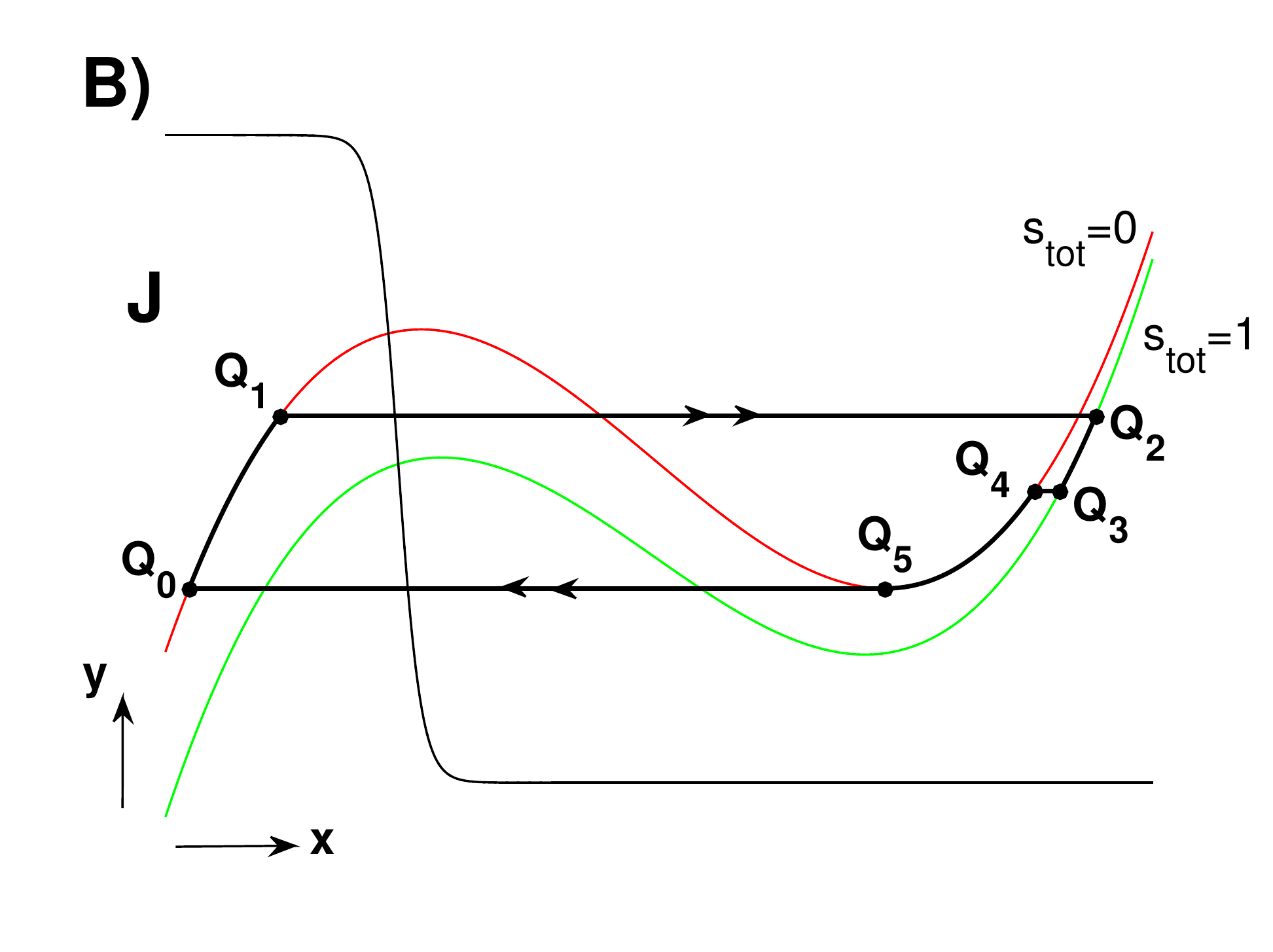}
  \caption{Plots of trajectories for A) $E$-cells and B) $J$-cell in black solid lines, if the two populations start in opposite phases and there is no delay in inhibition. Points $P_i$ and $Q_i$ correspond to the singular synchronous solution constructed in the text.}
  \label{nullclines_opp}
\end{figure}

Since the delay from the $E$-cells is sufficiently large, it enables the $J$-cell to reach the point $Q_1$ above the left knee of the $s_{tot}=1$ cubic, at the time when the excitation becomes effective. This makes it possible for the $J$-cell to jump up again to the active phase, labeled $Q_2$, while the $E$-cells move down along the right branch of the $s_J=0$ cubic. 
The sufficiently long delay $\tau_E$ in excitation also enables the $E$-cells to move down and reach the point, labeled $P_1$, below the right knee of the $s_J=1$ cubic when the $J$-cell jumps up. It follows that $E$ immediately jumps down to the left branch of the $s_J=1$ cubic labeled $P_2$ because of no inhibitory delay $\tau_J$. After the $\tau_E$ time, the $J$-cell jumps from $Q_3$ to the left branch of the adjacent cubic, $Q_4$. Since we assume that the $J$-cell has a sufficiently long active phase and $\tau_E$ is sufficiently large, the $E$-cells are able to reach the point $P_3$ above the left knee of the $s_J=0$ cubic before the $J$-cell reaches $Q_5$ and returns to its starting point, $Q_0$. This completes one cycle of the singular periodic synchronous solution.
\end{proof}

The analysis of the stability of the synchronous solution, if the $E$ and $J$ populations are in opposite phases, is similar to the previous case. Because of the difference in velocity along either branch, if one $E$-cell moves ahead the other cell eventually is caught up, and both cells are able to escape from the silent phase when inhibition turns off, if the $J$-cell active phase is sufficiently long and the sufficiently large delay in excitation is present.

\subsection {Longer active phase for $E$-cells} \label{subsec_longerE}
In the section above, one fundamental assumption is that the $J$-cell has a longer active phase than $E$-cells, which is motivated by biological observation in thalamocortical networks. However, Doiron et al.~\cite{doiron2003} have shown a synchronizing effect of global inhibitory feedback in a model which only represents the spiking dynamics of the excitatory cells. Thus, here we consider the case $E$-cell has a longer active phase. That is, we suppose that the $E$-cells and $J$-cell have overlapping active phases as described in Theorem~\ref{theorem_same_initial}. However, we now suppose that the $J$-cell reaches the right knee of the $s_{tot}=1$ cubic first and jumps down to its left branch before the $E$-cells. The stability for the synchronous solution can be shown in an analogous manner to Section~\ref{subsec_longerJ}.

\begin{theorem}
\label{theorem_longE}
A singular synchronous periodic solution exists if \\
(i) $y_F(1)>y_L(0)$, \\
(ii) the active phase of the $E$-cell is sufficiently long,\\
(iii) the delay $\tau_J$ is sufficiently large, and\\
(iv) the populations have overlapping active phases.
\end{theorem}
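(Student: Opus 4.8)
The plan is to mirror the construction used in the proof of Theorem~\ref{theorem_same_initial}, building the singular synchronous orbit by following the two identical $E$-cells together with the synchronized $J$-cell through the fast and slow subsystems, and keeping careful track of the delayed switching of the two synaptic terms. Start one cycle with the $E$-cells having just jumped up to the active phase on the right branch of the $s_J=0$ cubic (the analogue of $P_0$ in Fig.~\ref{nullclines}A). By condition (iv) the populations have overlapping active phases, so after time $\tau_E$ the $J$-cell feels the excitation and jumps up to the right branch of the $s_{tot}=1$ cubic with the $E$-cells still active. The inhibition to the $E$-cells becomes effective only a further time $\tau_J$ later, and since $\tau_J$ is large one arranges that this turn-on occurs after the $J$-cell has already left the right branch of the $s_{tot}=1$ cubic.

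The structural difference from Theorem~\ref{theorem_same_initial} is that now, because the $E$-cell active phase is sufficiently long (condition (ii)), the $J$-cell reaches the right knee of the $s_{tot}=1$ cubic and jumps down to its left branch \emph{before} the $E$-cells jump down. I would then track the $J$-cell up the left branch of the $s_{tot}=1$ cubic (it still receives excitation from the active $E$-cells), verifying that it settles near its fixed point there rather than re-escaping, until the $E$-cells eventually jump down; a time $\tau_E$ after that the excitation turns off and the $J$-cell jumps to the left branch of the $s_{tot}=0$ cubic. Meanwhile, when inhibition turns on, the $E$-cells—being high on the right branch of the $s_J=0$ cubic, since their active phase is long—jump to the right branch of the adjacent $s_J=1$ cubic, travel to its right knee, and then jump down to the left branch of the $s_J=1$ cubic. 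From here the argument is exactly as in Theorem~\ref{theorem_same_initial}: because $\tau_J$ is large the inhibition stays effectively on while the $E$-cells climb the left branch of the $s_J=1$ cubic, and existence of the periodic orbit reduces to showing that, by the time this inhibition turns off, the $E$-cells lie above the left knee of the $s_J=0$ cubic and hence can escape. Writing $\tau_{esc}$ for the time for $y$ to increase from $y_R(0)$ to $y_L(0)$ under $y'=G_L(y,1)$, this is the requirement that $\tau_{esc}$ be smaller than the duration $\tau_s^E$ of the $E$-cell silent phase; decomposing $\tau_s^E$ and the $J$-cell active phase into their constituent transit segments as in Cases~I--III of the earlier proof should again collapse to a single lower bound of the form $\tau_{esc}+(\text{controlled terms})<\tau_a^E$, which holds once the $E$-cell active phase and $\tau_J$ are taken large enough. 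Closing the loop, after the $E$-cells escape they return to $P_0$ and, after the delay $\tau_E$, so does the $J$-cell (using, as in Theorem~\ref{theorem_same_initial}, that the large $\tau_J$ leaves the $J$-cell above the left knee of the $s_{tot}=1$ cubic when excitation returns), completing one period.

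The main obstacle is the event bookkeeping: one must enumerate the sub-cases for where the $J$-cell sits along the $s_{tot}=1$ and $s_{tot}=0$ cubics when the various delayed switches fire (the analogues of Cases~I--III and Fig.~\ref{nullclines_J}), and in particular one must confirm that while the $J$-cell lingers on the left branch of the $s_{tot}=1$ cubic it does not prematurely re-escape before the excitation to it turns off—this is precisely where the qualitative hypotheses ``sufficiently long $E$-cell active phase'' and ``sufficiently large $\tau_J$'' must be turned into genuine inequalities on slow-subsystem transit times. As in Section~\ref{subsec_longerJ}, the role played by the extra slow variable $\dot s_J$ in the Rubin--Terman model~\cite{RUBIN02} is here played by the inhibitory delay $\tau_J$, which supplies the ``window of opportunity'' for the $E$-cells. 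Finally, the stability statement follows exactly as in Section~\ref{subsec_longerJ}: since $g$ is negative above the monotone decreasing $y$-nullcline with magnitude increasing as the $E$-cells traverse each branch, the gap between the two $E$-cells contracts along every branch, and for a sufficiently small initial perturbation this contraction dominates any expansion incurred during the fast jumps, so nearby trajectories are drawn into the synchronous solution.
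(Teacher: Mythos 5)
Your construction is essentially the paper's: the same event ordering ($E$-cells fire, the $J$-cell fires a time $\tau_E$ later, reaches its right knee and drops to the left branch of the $s_{tot}=1$ cubic before the $E$-cells feel inhibition, the $E$-cells are then forced down, the $J$-cell moves to the left branch of the $s_{tot}=0$ cubic a time $\tau_E$ after that, and the large $\tau_J$ keeps inhibition on long enough for the $E$-cells to climb above $y_L(0)$ and escape), and the same stability argument, which the paper also only sketches by reference to Section~\ref{subsec_longerJ}. You are in fact more careful than the paper in flagging that one must rule out the $J$-cell re-escaping from the left branch of the $s_{tot}=1$ cubic before excitation turns off; the paper leaves this implicit.

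One quantitative claim you add is miscalibrated. If you write out the event times ($J$ fires at $t_0$, inhibition arrives at $t_0+\tau_J$, the $E$-cells jump down at $t_0+\tau_J+\tau_E^*$, the $J$-cell jumps down at $t_0+\tau_a^J$ and releases the $E$-cells at $t_0+\tau_a^J+\tau_J$), the $E$-cell silent phase has duration $\tau_a^J-\tau_E^*$, independent of $\tau_J$ and of $\tau_a^E$. So the escape inequality is again $\tau_{esc}+\tau_E^*<\tau_a^J$, a lower bound on the $J$-cell's active-phase transit, not a bound of the form $\tau_{esc}+(\cdots)<\tau_a^E$ as you state; the role of hypothesis (ii) is only to fix the ordering of the knee crossings, not to supply the escape window. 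The paper does not carry out this computation for this theorem, so this does not contradict its proof, but the inequality as you wrote it would not be the right one to verify.
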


\begin{proof}
We begin with the $J$-cell in its silent phase and the $E$-cells having just jumped up to the active phase of the $s_J=0$ cubic. This is followed, when excitation turns on a time $\tau_E$ later, by the $J$-cell jumping up to the right branch of the $s_{tot}=1$ cubic. See Figure~\ref{nullclines_longE} for the solution trajectories of both populations. 

Since the active phase of the $E$-cell is sufficiently long, the $J$-cell reaches the right knee of the $s_{tot}=1$ cubic, $Q_1$, before the $E$-cell reaches its respective right knee. This makes the $J$-cell jump down to the left branch of the $s_{tot}=1$ cubic, labeled $Q_2$. Also, as $\tau_J$ is sufficiently large, it is possible for the $J$-cell to reach the right knee and jump down before the $E$-cells feel the effect of inhibition. Once the inhibition turns on, the $E$-cells jump down to the left branch of the $s_J=1$ cubic regardless of their positions along the right branch of the $s_J=0$ cubic. The sufficiently large $\tau_J$ ensures that the $E$-cells can reach the point slightly above or below the right knee of the $s_J=1$ cubic so that they jump down soon after inhibition effectively turns on. One possible trajectory for the $E$-cells, in which they lie below the right knee of the $s_J=1$ cubic, is illustrated in Fig.~\ref{nullclines_longE}A. The other possible trajectory, in which the $E$-cells lie above the right knee, is similar to that shown in Fig.~\ref{nullclines}A. Even though the inhibition to the $E$-cells becomes effective before the $J$-cell jumps down, the $E$-cells jump down to the left branch of the $s_J=1$ cubic either immediately or soon after the $J$-cell jumps down. 

\begin{figure} [htb!]
\centering
\includegraphics[height=52mm, width=58mm]{nullclines_E1-eps-converted-to.pdf}
  \includegraphics[height=52mm, width=58mm]{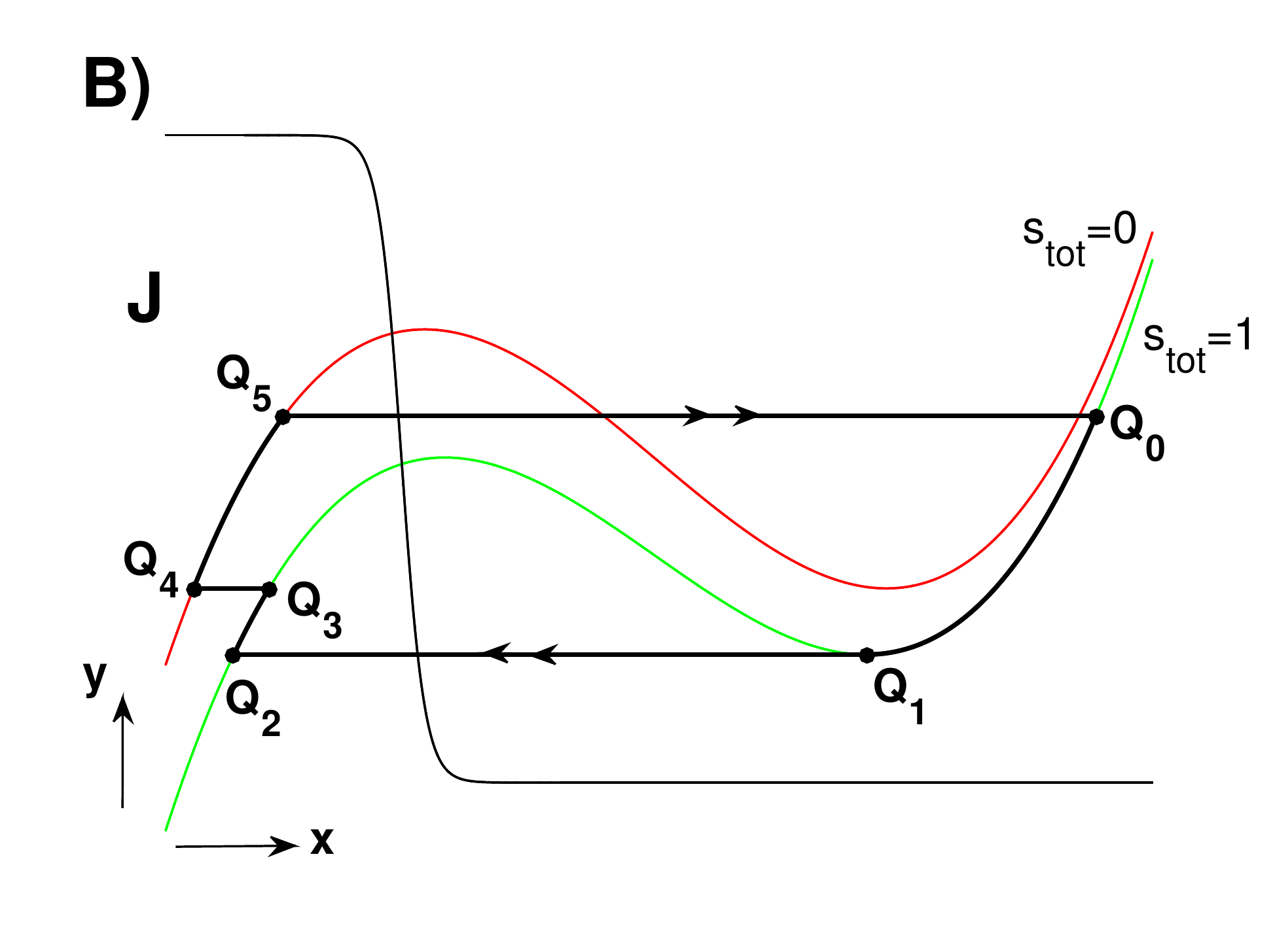}
  \caption{Plots of solution trajectories for A) $E$-cells and B) $J$-cell in black solid lines, if the two populations have overlapping active phases and the $E$-cells have longer active phase than the $J$-cell.}
  \label{nullclines_longE}
\end{figure}

The $\tau_E$ time later after the $E$-cells jump down, excitation to the $J$-cell turns off, which results in the $J$-cell jumping from $Q_3$ to $Q_4$ while the $E$-cells continue to move up in the silent phase. Even though the $J$-cell jumps down before the $E$-cells, the latter still receive inhibition from the former because the delay $\tau_J$ is sufficiently long. This allows the $E$-cells to reach the point above the left knee of the $s_J=0$ cubic, $P_3$, and to jump up to $P_0$ when they are finally released from the inhibition. Once the $E$-cells jump up to the right branch of the $s_J=0$ cubic again. After the $\tau_E$ time the $J$-cell reaches the point $Q_5$ above the left knee of the $s_{tot}=1$ cubic and jumps up to its starting point $Q_0$. Thus, one full cycle of the synchronous solution is complete.
\end{proof}

\begin{remark}
\label{remark_longE}
As noted in one of the hypotheses above, the delay $\tau_J$ being large enough is a sufficient condition for the existence of the synchronized solution, if we assume that the two populations have overlapping active phases. However, we could omit the delay $\tau_E$ in excitation and obtain the same results. 
All that is needed is that the $E$-cells have a sufficiently long active phase and the $\tau_J$ is sufficiently large. 
Suppose there is no delay from the $E$-cells (i.e.,~$\tau_E=0$), when the $E$-cells jump down to the left branch, excitation to the $J$-cell immediately turns off so that the $J$-cell jumps to the left branch of the $s_{tot}=0$ cubic. 
The rest of trajectories will follow Theorem~\ref{theorem_longE}, which implies that the $E$-cells move up the left branch of the $s_J=1$ cubic and jump again when they are released from inhibition after the large delay $\tau_J$. 
\end{remark}

Finally, we consider the case where there is no delay in inhibition while having delay in the excitatory synapses, i.e., $\tau_J=0$ and $\tau_E>0$. For a synchronous solution to exist, this case requires a different condition for the positions of both populations similar to Corollary~\ref{corollary_opposite}. We need both populations start in different phases, for example, the $E$-cells in the active and the $J$-cell in the silent phase. Then, the synchronous solution appears in the absence of delay from the $J$-cell. The proof of this case is given below.

\begin{corollary}
\label{corollary_longE_opposite}
Suppose that there is no delay inhibition (i.e.,~$\tau_J=0$) but there is a sufficiently large delay $\tau_E$ in excitation. Then, a singular synchronous periodic solution exists if (i)--(ii) in Theorem~\ref{theorem_longE} are satisfied, and the two populations start in opposite phases.
\end{corollary}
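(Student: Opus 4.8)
The plan is to prove the corollary the same way its siblings were proved: by writing the singular synchronous orbit down one segment at a time. I would rerun the construction of Corollary~\ref{corollary_opposite}, now in the geometry of Theorem~\ref{theorem_longE} in which the $E$-cells (rather than the $J$-cell) have the long active phase, and pin the reference configuration at the instant the $J$-cell jumps down to its silent branch with the two populations in opposite phases: the $E$-cells sit on the left branch of the $s_J=1$ cubic at a point above the left knee of the $s_J=0$ cubic, so, since $\tau_J=0$, they are released from inhibition at once and jump up to the right branch of the $s_J=0$ cubic. Hypothesis~(i), $y_F(1)>y_L(0)$, is what makes this jump-up geometrically available, as in the remark after Theorem~\ref{theorem_same_initial}. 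The goal is then to show that the trajectory returns to this configuration after one period.

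I would trace the cycle in three stages. (1) For the next $\tau_E$ time units the $E$-cells run up the active branch of the $s_J=0$ cubic while the $J$-cell climbs its silent branch; since $\tau_E$ is large, by the time excitation becomes effective the $J$-cell has risen past the left knee of the $s_{tot}=1$ cubic and jumps up to the active phase from a high $y$-value, which makes the resulting $J$-cell active excursion long. (2) Because $\tau_J=0$, the inhibition to the $E$-cells turns on at that same instant, so the $E$-cells jump onto the $s_J=1$ cubic; a short case split (whether they lie above or below the right knee of that cubic, with the large $\tau_E$ putting them below) shows that in either case they land on, or immediately reach, its left branch and enter the silent phase. (3) In the resulting mirror configuration ($E$ silent and inhibited, $J$ active), the $J$-cell reaches its right knee of the $s_{tot}=1$ cubic and jumps down (hypothesis~(ii) plays the role it does in Theorem~\ref{theorem_longE}), and a time $\tau_E$ after the $E$-cells jumped down the excitation to $J$ turns off, leaving the $J$-cell on the left branch of the $s_{tot}=0$ cubic; the sub-cases here are dispatched exactly as in Cases~I--III of the proof of Theorem~\ref{theorem_same_initial}, all producing the same outcome. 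To close the loop I would observe that throughout the $J$-cell's active phase the $E$-cells climb the left branch of the $s_J=1$ cubic, slowing as they approach the $y$-nullcline; because that active phase is long (stage 1) and $y_F(1)>y_L(0)$, the $E$-cells rise above the left knee of the $s_J=0$ cubic before the $J$-cell goes silent, so when the $J$-cell finally jumps down the $E$-cells are released above that knee, jump back to the right branch of the $s_J=0$ cubic, and a further $\tau_E$ later the $J$-cell jumps up to its own starting point by the argument of stage 1, completing the period. Simple period formulas and the stability statement then follow as in the material around Theorem~\ref{theorem_longE}: compression of nearby $E$-trajectories along each branch, coming from the shape of $g$, beats any expansion at the jumps when the perturbation is small.

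The main obstacle is the loss of the ``residual inhibition'' window that the large $\tau_J$ supplied in Theorems~\ref{theorem_same_initial} and~\ref{theorem_longE}: with $\tau_J=0$ the inhibition felt by the $E$-cells coincides exactly with the $J$-cell's active phase, so the escape of the $E$-cells above the left knee of the $s_J=0$ cubic must be powered entirely by the length of that active phase. The delicate quantitative point is to show that ``$\tau_E$ sufficiently large'' really does force the $J$-cell active phase (equivalently, the $E$-cell silent phase) to exceed the escape time $\tau_{esc}$ for $y$ to increase from $y_R(0)$ to $y_L(0)$ under $y'=G_L(y,1)$; this is precisely where the ``opposite phases'' hypothesis is indispensable, since it is what aligns the $E$-cell silent phase with the $J$-cell active phase. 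A secondary bookkeeping point is to choose the delayed initial history consistently so that the constructed object is genuinely periodic (the excitation arriving at the $J$-cell at the start of the cycle must reflect the $E$-cell state $\tau_E$ earlier), and to check, as in the earlier theorems, that the case splits reconverge.
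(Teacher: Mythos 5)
Your construction follows the paper's proof of Corollary~\ref{corollary_longE_opposite} almost segment for segment: the same opposite-phase starting configuration, the same use of large $\tau_E$ to let the $J$-cell climb above the left knee of the $s_{tot}=1$ cubic before excitation arrives, the same observation that $\tau_J=0$ forces the $E$-cells onto the $s_J=1$ cubic (below its right knee, by large $\tau_E$) the instant the $J$-cell fires, and the same closing step in which the $E$-cells, released the moment the $J$-cell goes silent, jump up from above the left knee of the $s_J=0$ cubic. Your side remarks are also consistent with the paper: the escape is indeed powered entirely by the length of the $J$-cell active phase (which, with $\tau_J=0$, equals the $E$-cell silent phase), and the paper's own subsequent remark confirms your suspicion that hypothesis~(ii) is not really doing the work here.

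There is, however, one step that does not go through as you state it. In your stage~(3) you route the $J$-cell's descent through the right knee of the $s_{tot}=1$ cubic and assert that the sub-cases ``are dispatched exactly as in Cases~I--III of the proof of Theorem~\ref{theorem_same_initial}, all producing the same outcome.'' That transfer fails when $\tau_J=0$: in Case~III the $E$-cell silent phase was $\tau_{E1}+\tau_J$, and deleting $\tau_J$ leaves only $\tau_{E1}$, the $J$-cell's intrinsic transit time down the right branch of $s_{tot}=1$. That quantity is not controlled by making $\tau_E$ large --- raising the $J$-cell's entry height $y^*$ helps, but $y^*$ is capped by the left-branch fixed point of the $s_{tot}=0$ cubic, so ``the active excursion is long because the entry point is high'' requires an unstated condition on the $J$-cell's intrinsic dynamics relative to $\tau_{esc}$. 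The paper avoids this by taking the opposite sub-case as the generic one: the $J$-cell is still on the right branch of $s_{tot}=1$, below the right knee of $s_{tot}=0$, when excitation turns off at time $\tau_E$ after the $E$-cells jumped down, so it jumps directly to the left branch of $s_{tot}=0$ and the $E$-cell silent phase comes out to be exactly $\tau_E$; the escape condition $\tau_E>\tau_{esc}$ then follows immediately from ``$\tau_E$ sufficiently large.'' If you restrict your stage~(3) to that scenario (your Case~II analogue) the argument closes cleanly; as written, your primary scenario is the one sub-case in which the hypothesis does not directly deliver the escape.
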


\begin{proof}
We begin with the starting positions for both populations as described in Corollary~\ref{corollary_opposite}: the $J$-cell is in the silent phase just after it has jumped down to the left branch of the $s_{tot}=0$ cubic, while the $E$-cells have just jumped up to the active phase due to the immediate release of inhibition. The reason why the $J$-cell jumps down to the $s_{tot}=0$ cubic is because the excitatory delay, $\tau_E$, is nonzero. 

Since $\tau_E$ is sufficiently large, the turn-on of excitation to the $J$-cell does not immediately follow when the $E$-cells jump up. In fact, this large delay enables the $J$-cell to reach the point above the left knee of the $s_{tot}=1$ cubic, at the time when excitation effectively turns on. When the $J$-cell jumps up to the right branch of the $s_{tot}=1$ cubic, the $E$-cells immediately receive inhibition. As shown in Corollary~\ref{corollary_opposite}, because of the large $\tau_E$ the $E$-cells lie below the right knee of the $s_J=1$ cubic when the $J$-cell jumps up. Thus, they immediately jump down to the left branch of the $s_J=1$ cubic. 

Now the situation repeats but in reverse. The $J$-cell moves down the right branch of $s_{tot}=1$ cubic and the $E$-cells move up the left branch of the $s_J=1$ cubic. Since $\tau_E$ is sufficiently large, when the excitation to the $J$-cell turns off, the $J$-cell will reach the point below the right knee of the $s_{tot}=0$ cubic and will jump down to the left branch of the $s_{tot}=0$ cubic. When this occurs the $E$-cells will be immediately released from inhibition. Since $\tau_E$ is sufficiently large, the $E$-cells will be above the left knee of the $s_J=0$ cubic and thus will jump up to the right branch of this cubic. Therefore, both populations return to their respective starting points, and one cycle of the synchronous solution is complete. 
\end{proof}

\begin{remark}
Note that the longer active phase for the $E$-cells is not the necessary hypothesis to ensure the existence of synchronous solution in Corollary~\ref{corollary_longE_opposite} above. In fact, in the proof of Corollary~\ref{corollary_opposite} the longer active phase for the $J$-cell is promoting the $E$-cells to escape from the silent phase but we could omit this condition from the hypotheses as long as $\tau_E$ is sufficiently large. Thus, Corollary~\ref{corollary_opposite}--\ref{corollary_longE_opposite} are essentially the same if we eliminate the condition on the active phase for each cell from their hypotheses.
\end{remark}

\section{Numerical Simulations} \label{numerical}
We conduct numerical simulations to illustrate the synchronous solutions under different conditions on the relative duration of the active phase for the $E$-cells and the $J$-cell, and on their delays, as constructed in Section~\ref{analysis}. We first consider a simple model by specifying explicit functions for $f$ and $g$ in Eqs.~(\ref{excx})--(\ref{excy}), for $f_J$ and $g_J$ in Eqs.~(\ref{inhx})--(\ref{inhy}), and synaptic variables $s_i$ and $s_J$. Then, we consider a more complex model for the thalamic spindle sleep rhythm, which resembles the globally inhibitory networks described in Section~\ref{model}.

 \subsection{Simple model}\label{simple}
The first model we consider is the following specific version of Eqs.~(\ref{excx})--(\ref{inhy})
 \begin{align} 
 \label{numeric_x}
\dot x_i&=3x_i-x_i^3+y_i-g_{inh}s_J(x_J(t-\tau_J))(x_i-x_{inh}),\\
\label{numeric_y}
\dot y_i&=\epsilon (\lambda -\gamma \tanh (\beta (x_i-\delta))-y_i),\\
\label{numeric_xJ}
\dot x_J&=3x_J-x_J^3+y_J-g_{exc}\left({1\over N}\sum_{i} s_i(x_i(t-\tau_E))\right)(x_J-x_{exc}),\\
\label{numeric_yJ}
\dot y_J&=\epsilon(\lambda_J -\gamma_J \tanh (\beta_J (x_J-\delta))-y_J),
\end{align}
Note that $f$ and $f_J$ are the same. Using different functions would not alter the results significantly. These functions are modified from those used in~\cite{terman95,CampWang98} so that the properties of $f$ and $g$ are as illustrated in Fig.~\ref{single_orbit}. The parameters $\beta,~\beta_J$ denote the steepness of the sigmoidal curves for the $y$-nullcline and $y_J$-nullcline, respectively, and we set both to be $ \gg 1$. The parameters $\lambda,~\gamma$, and $\delta$ for the $E$-cells are used to modify the amount of time  they spend in the left or right branches as their speed along either branch depends on the $y$-nullcline. Model parameters for the $J$-cell, which are different from those for $E$-cells, are similarly defined.

The coupling function, $s$, is defined to be a sigmoid curve having the form of
 \begin{align} 
 \label{coupling}
 s(x)=[1+\exp (-(x-\theta)/\sigma)]^{-1},
\end{align}
where $\sigma$ determines the steepness of this sigmoid and is set to be $\ll 1$. The parameter $\theta$ is the threshold for $x$-variable, i.e., the value at which $s$ rapidly changes from 0 to 1. 

\subsubsection{Longer active phase for the $J$-cell}

We used the simulation package XPPAUT~\cite{Ermentrout02} to numerically integrate Eqs.~(\ref{numeric_x})--(\ref{numeric_yJ}) and show the existence of stable synchronous solutions under the different cases we considered in Section~\ref{subsec_longerJ}. First, we consider the case where the $J$-cell has a longer active phase than the $E$-cells. Figure~\ref{simple_longJ} shows the synchronous solutions for two $E$-cells and one $J$-cell with different combinations of two coupling delays, $\tau_J$ and $\tau_E$; A) both delays are non-zero, specifically, $\tau_J=7$ and $\tau_E=3$,  B) the time delay in the excitatory synapse is zero, i.e.,~$\tau_E=0$, while $\tau_J=10$, and C) the time delay in the inhibitory synapse is zero, i.e.,~$\tau_J=0$, while $\tau_E=10$. Model parameters are chosen as described in Fig.~\ref{simple_longJ}. The black and green curves correspond to the oscillations in the $J$-cell's and one $E$-cell's voltages, respectively. The second $E$ cell is plotted in red, but the cells synchronize so quickly that this curve is not visible in the simulations.

\begin{figure} [htb!]
\centering
\includegraphics[height=55mm, width=58mm]{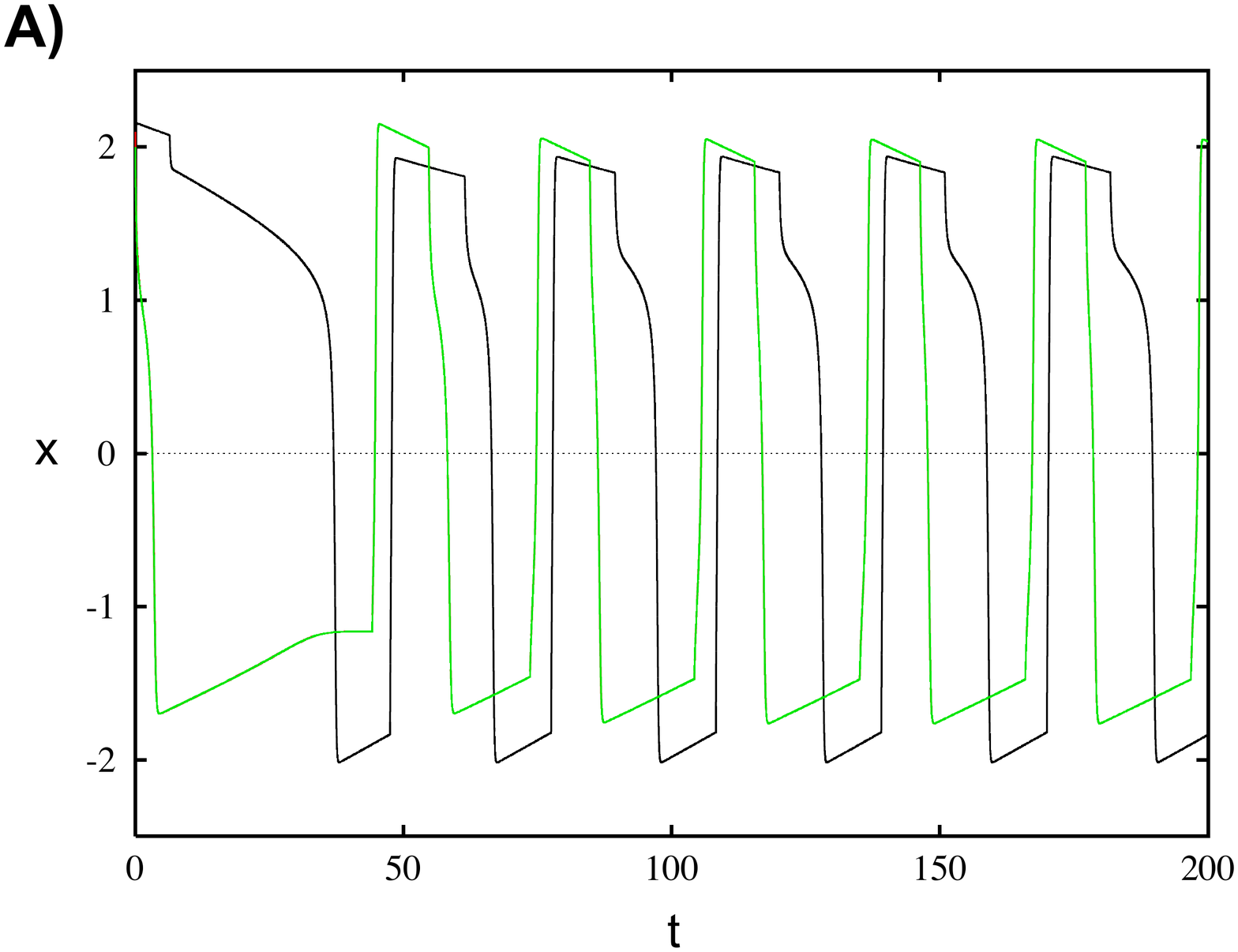}\\
\vspace{-0.5cm}
\includegraphics[height=55mm, width=58mm]{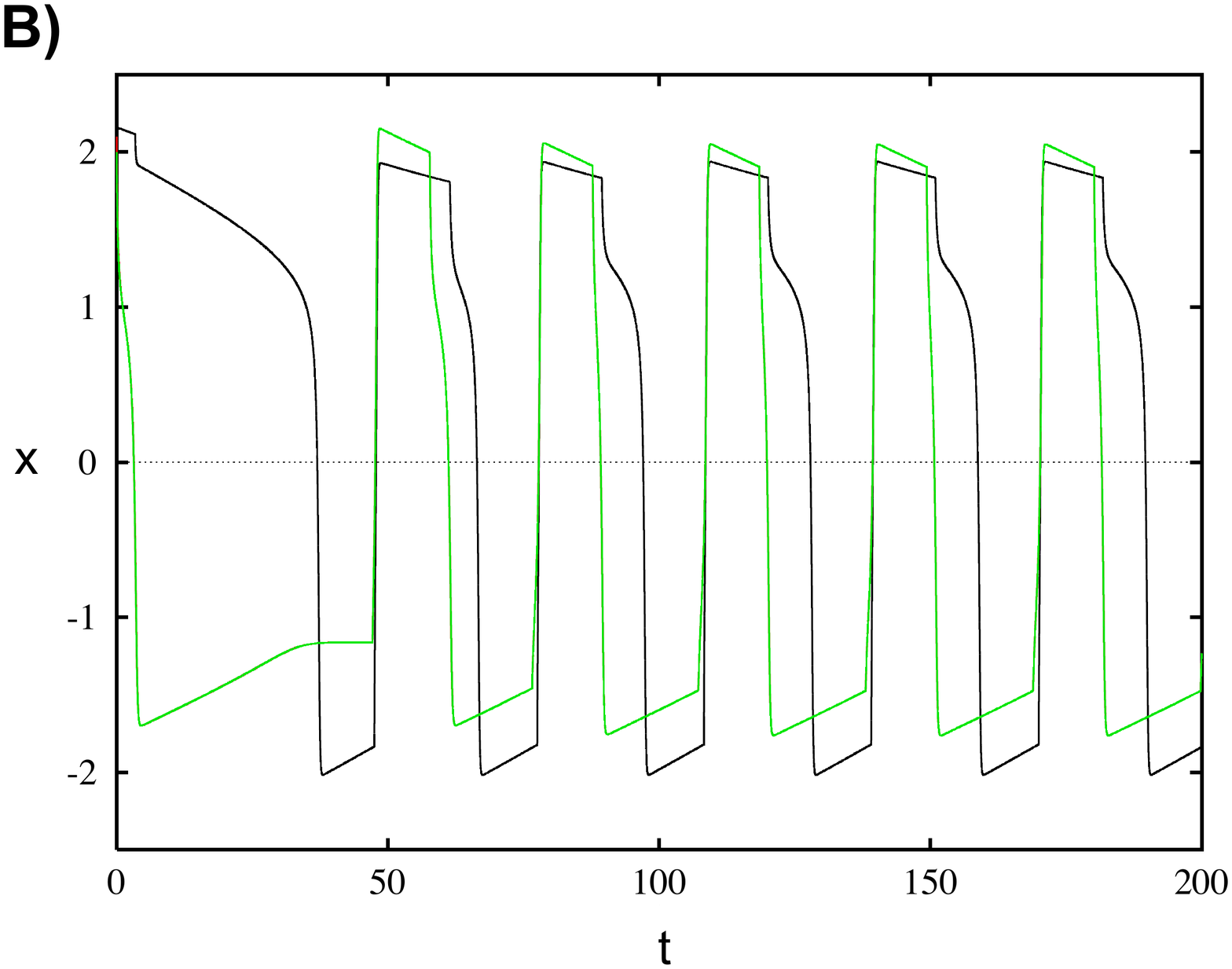}
\includegraphics[height=55mm, width=58mm]{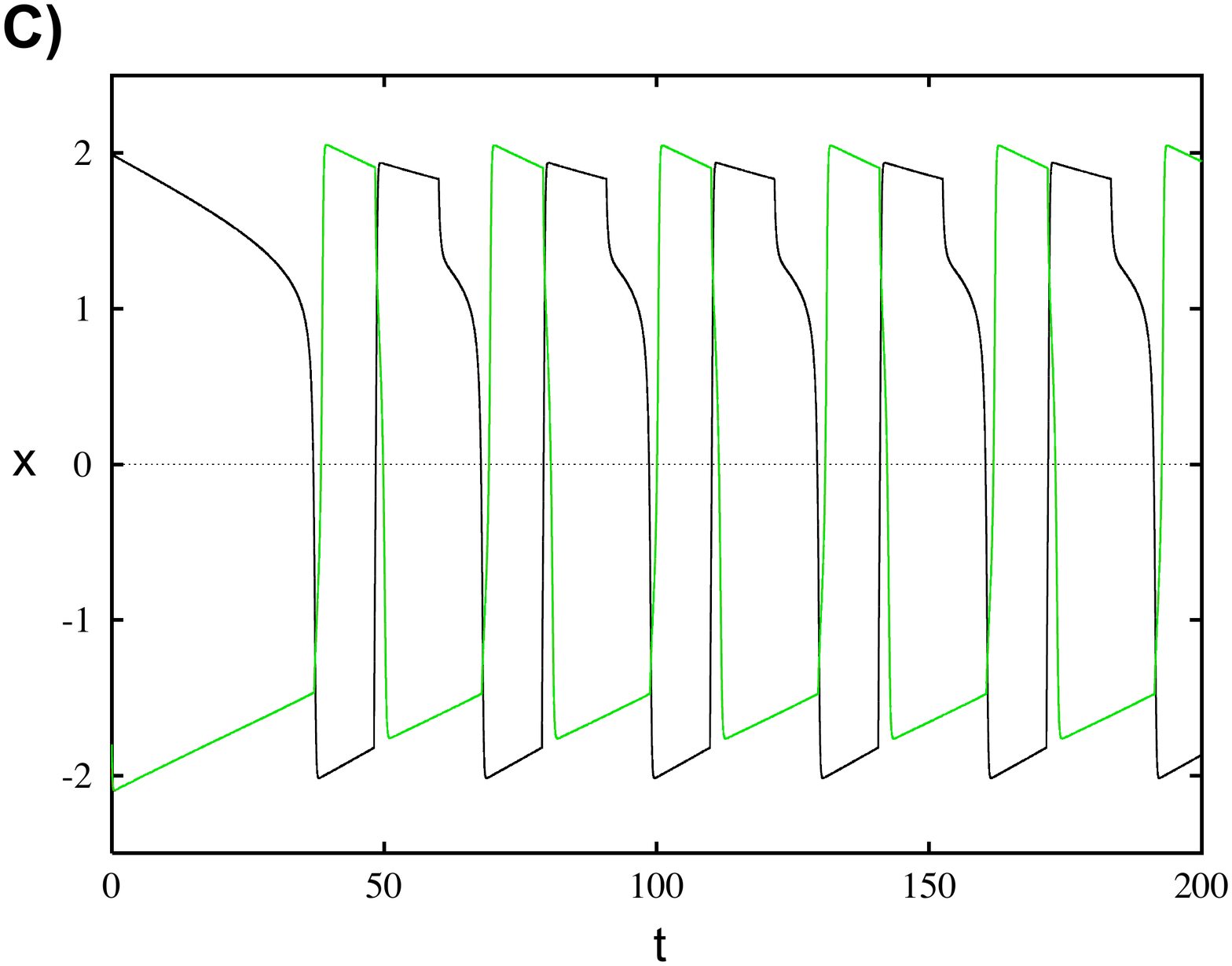}
\vspace{-0.5cm}
  \caption{Synchronous solutions for two $E$-cells and one $J$-cell with different combinations of the two coupling delays, when the $J$-cell has a longer active phase; A) $\tau_J=7$ and $\tau_E=3$, B) $\tau_J=10$ and $\tau_E=0$, and C) $\tau_J=0$ and $\tau_E=10$. The black and green/red curves are the time courses of $J$-cell and $E$-cell voltages, respectively. Parameter values used are $\epsilon=0.025,~\gamma=\gamma_J=5,~\beta=\beta_J=10,~\delta=\delta_J=-1.1,~\lambda=1,~\lambda_J=0,~\sigma=0.002$, and $\theta=-0.5$. Coupling parameter values are $g_{exc}=g_{inh}=1,~x_{inh}=-3$, and $x_{exc}=3$. }
  \label{simple_longJ}
\end{figure}

Note that in all three solutions of Fig.~\ref{simple_longJ}, the $J$-cell has a longer active phase than the $E$-cells. In the oscillations shown in Fig.~\ref{simple_longJ}A, specifically, the $J$-cell (black oscillations) fires $\tau_E$ time after the $E$-cells fire but the latter jump down before the former does, as described in Theorem~\ref{theorem_same_initial}. When the $E$-cells jump down, the $J$-cell is still in the active phase. However, the $J$-cell jumps to the right branch of another cubic after $\tau_E$ time and eventually jumps down to the silent phase as soon as it reaches the right knee of the same cubic (see Figs.~\ref{nullclines}A and B). By the time when inhibition from the $J$-cell completely turns off, the $E$-cells reach the point above the threshold for firing, thus they jump to the active phase again and complete one full cycle. In Fig.~\ref{simple_longJ}B, since there is no delay in the excitatory synapses (see Remark~\ref{remark_same_initial}), the $J$-cell fires immediately after the $E$-cells fire. The rest of trajectories are analogous to those of Fig.~\ref{simple_longJ}A. Finally, Fig.~\ref{simple_longJ}C shows the synchronous solution in the case that there is no delay in the inhibitory synapses. The $E$-cells fire when the $J$-cell jumps down, resulting in the two populations being in opposite phases, as described in Corollary~\ref{corollary_opposite}. 

\subsubsection{Longer active phase for the $E$-cells}
Now we consider the case where the $E$-cells have a longer active phase, described in Section~\ref{subsec_longerE}, by modifying relevant model parameters. Figure~\ref{simple_longE} shows the synchronous solutions for two $E$-cells and one $J$-cell with different combinations of two coupling delays, $\tau_J$ and $\tau_E$; A) both delays are non-zero, specifically, $\tau_J=30$ and $\tau_E=15$,  B) the time delay in the excitatory synapse is zero, i.e.,~$\tau_E=0$, while $\tau_J=45$, and C) the time delay in the inhibitory synapse is zero, i.e.,~$\tau_J=0$, while $\tau_E=45$. Model parameters are chosen as described in Fig.~\ref{simple_longE}. The black and green curves correspond to the oscillations in the $J$-cell's and one $E$-cell's voltages, respectively. The second $E$ cell is plotted in red, and not visible due to the synchronization.

\begin{figure} [htb!]
\centering
\includegraphics[height=55mm, width=58mm]{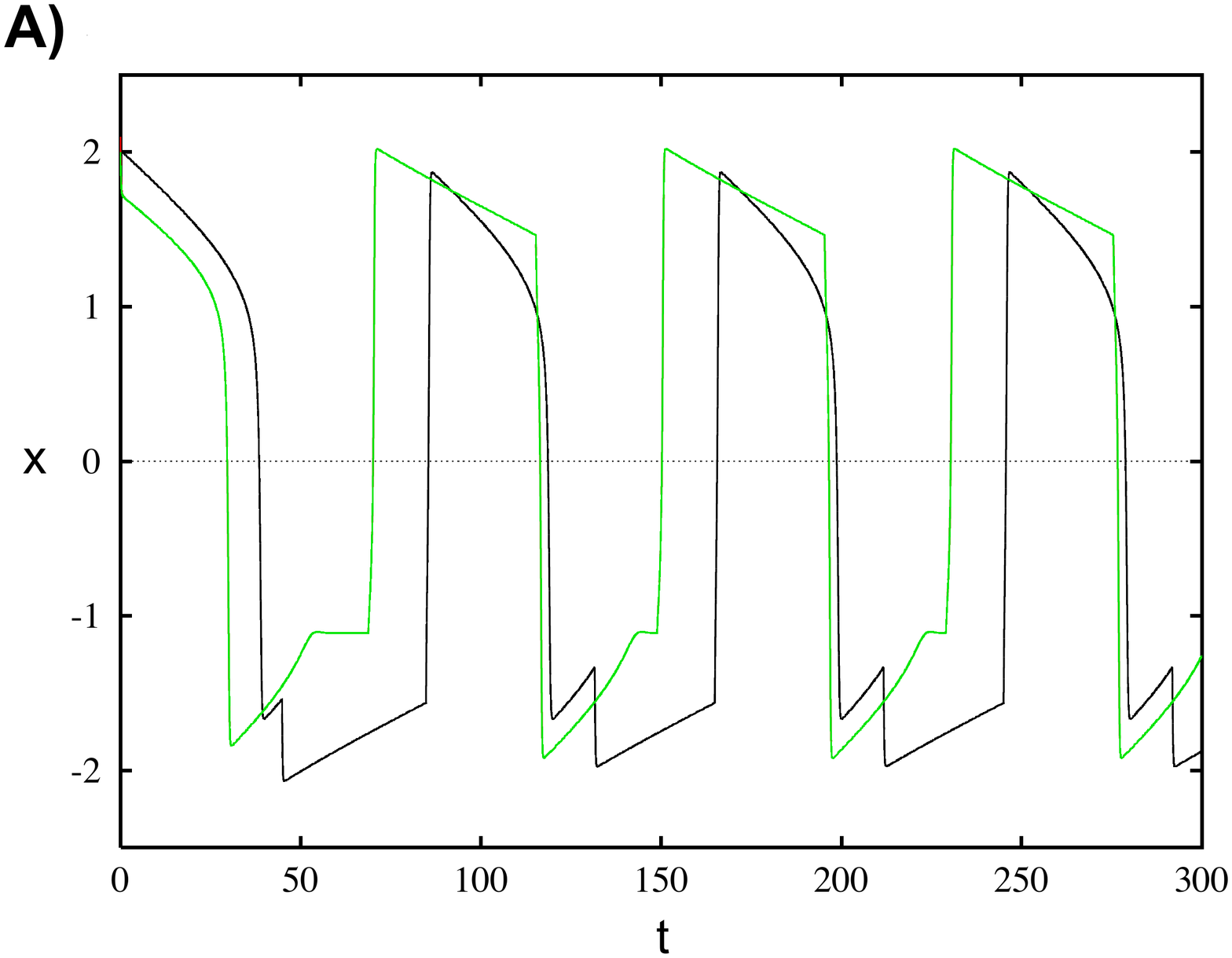}\\
\vspace{-0.5cm}
\includegraphics[height=55mm, width=58mm]{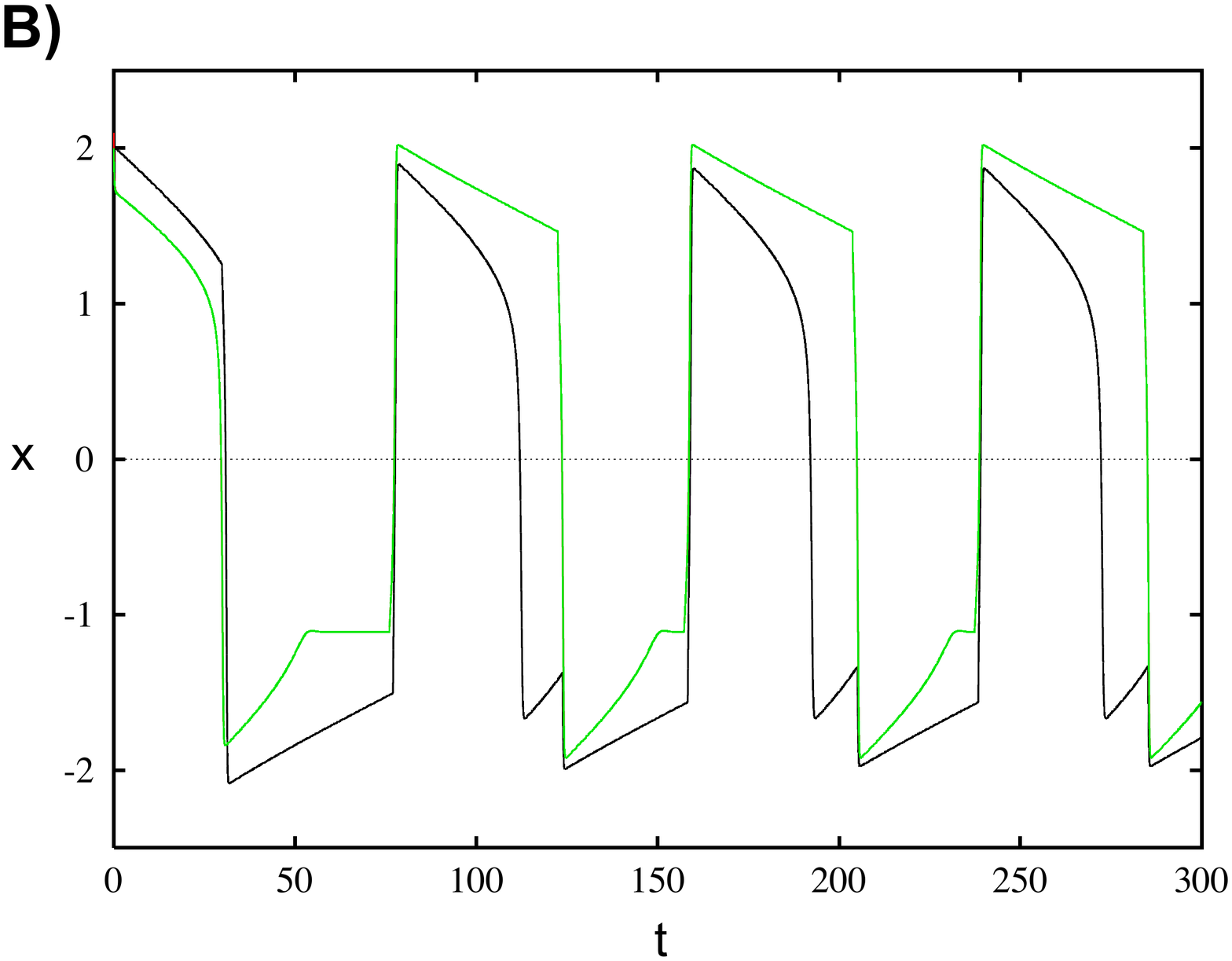}
\includegraphics[height=55mm, width=58mm]{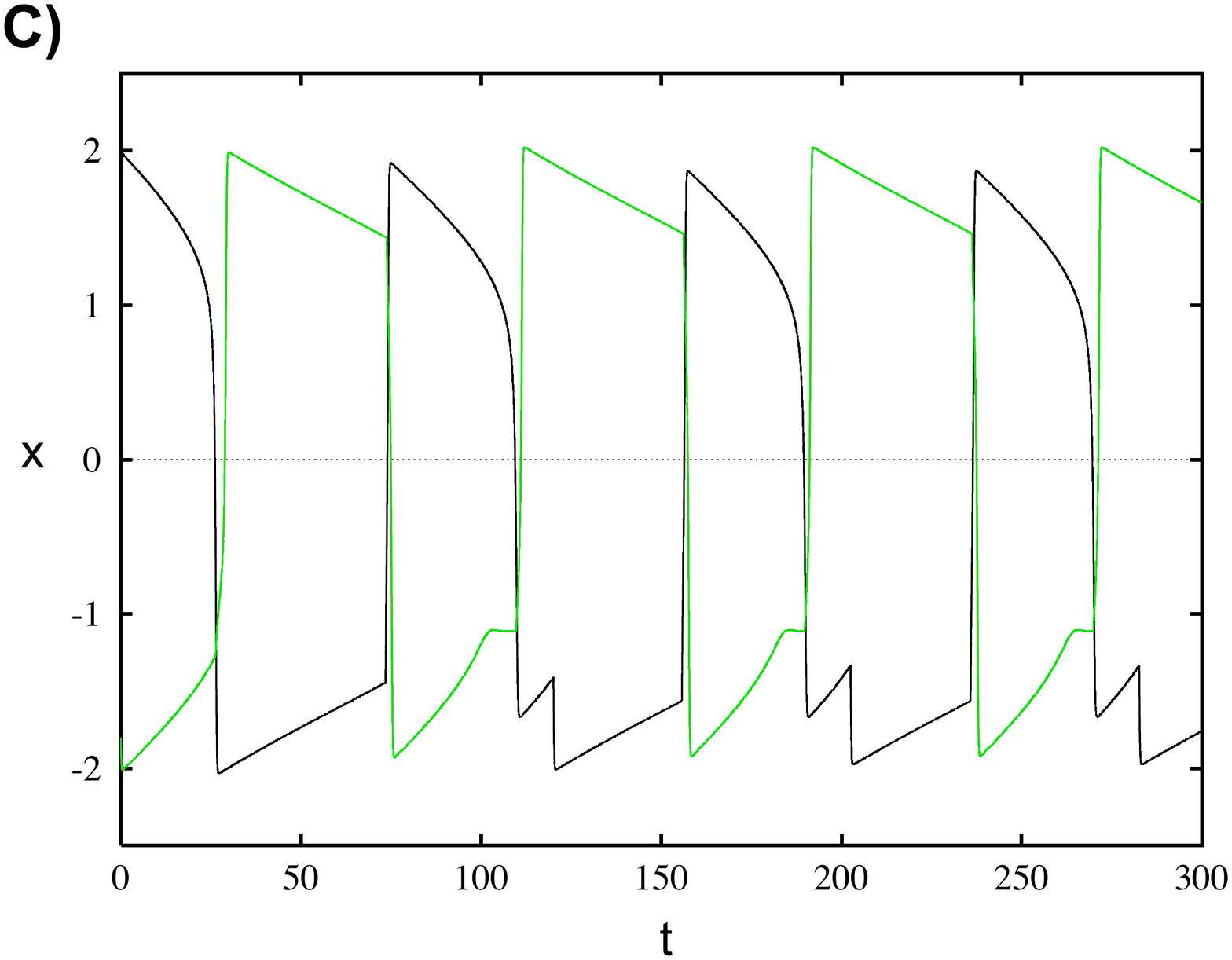}
\vspace{-0.5cm}
  \caption{Synchronous solutions for two $E$-cells and one $J$-cell with different combinations of the two coupling delays,  when the $E$-cells have a longer active phase; A) $\tau_J=30$ and $\tau_E=15$, B) $\tau_J=45$ and $\tau_E=0$, and C) $\tau_J=0$ and $\tau_E=45$. The black and green/red curves are the time courses of $J$-cell and $E$-cell voltages, respectively. Parameter values used are $\epsilon=0.025,~\gamma=\gamma_J=5,~\beta=\beta_J=10,~\delta=\delta_J=-1.1,~\lambda=2,~\lambda_J=-2,~\sigma=0.002$, and $\theta=-0.5$. Coupling parameter values are $g_{exc}=g_{inh}=0.5,~x_{inh}=-2.2$, and $x_{exc}=2.2$.  }
  \label{simple_longE}
\end{figure}

In contrast to Fig.~\ref{simple_longJ}, in all three solutions of Fig.~\ref{simple_longE}, the $E$-cells have longer active phase than the $J$-cell. First, in the oscillations shown in Fig.~\ref{simple_longE}A, the $J$-cell (black oscillations) fires $\tau_E$ time after the $E$-cells fire, and it jumps down right after the $E$-cells do. This corresponds to the case where $E$-cells immediately jump down when inhibition effectively turns on even before they reach their respective right knee, as described in Theorem~\ref{theorem_longE}.   Because of sufficiently large delay from the $J$-cell, it is possible that the $E$-cells do not receive inhibition by the time when the $J$-cell reaches close to the right knee. When inhibition finally turns on, the $E$-cells jump down, and then $J$-cell reaches its right knee and jumps down. After the $\tau_E$ time, the $J$-cell stays in the silent phase and jumps down to the left branch of another cubic (see Figs.~\ref{nullclines_longE}A and B) while the $E$-cells still receive inhibition and continue to move up in the silent phase. As soon as the $E$-cells are released from inhibition, they fire again, and one full cycle of the synchronous solution is complete. In Fig.~\ref{simple_longE}B, since there is no delay in the excitatory synapses (see Remark~\ref{remark_longE}), the $J$-cell fires immediately after the $E$-cells fire. The rest of trajectories are analogous to those of Fig.~\ref{simple_longE}A. Finally, Fig.~\ref{simple_longE}C shows the synchronous solution in case that there is no delay in the inhibitory synapses. In this case, the $E$-cells fire when the $J$-cell jumps down, resulting in the two populations being in opposite phases, as described in Corollary~\ref{corollary_longE_opposite}. 

\subsection{Thalamic model}\label{thalamic}
To verify whether our analysis of the effect of time delays in networks with global inhibition provided in Section~\ref{analysis} gives insight into more biologically relevant models, we consider a model for sleep spindle rhythms based on~\cite{GR94,Golomb94,TBK96,RT00b,RUBIN02}. The model is based on the Hodgkin Huxley formalism~\cite{HH} and builds on the work of~\cite{Golomb94,TBK96,RT00b,RUBIN02}. Model parameters and the forms of some nonlinear functions are given in~\ref{appendix}.

The model consists of two thalamocortical relay (TC) cells which are excitatory and one inhibitory cell which represents the global inhibition from the reticular nucleus, called a reticular cell (RE). The model based on~\cite{RT00b} has the following form
\begin{equation}
\begin{array}{rcl}
\displaystyle\frac{dV_{TC,i}}{dt} &=& -I_{ionic,TC}(V_{TC,i},h_{TC,i},r_{TC,i}) \\
&&- (V_{TC,i}-V_{inh})\left(g_{RT,A}\,s_{RT,A} + g_{RT,B}\,s_{RT,B}\right) \\
\displaystyle\frac{dh_{TC,i}}{dt}&=& \displaystyle\frac{h_{TC,\infty}(V_{TC,i})-h_{TC,i}}{\tau_{hTC}(V_{TC,i})}\\
\displaystyle\frac{dr_{TC,i}}{dt}&=& \displaystyle\frac{r_{TC,\infty}(V_{TC,i})-r_{TC,i}}{\tau_{rTC}(V_{TC,i})}\\
\displaystyle\frac{ds_{RT,A}}{dt}&=& \alpha_{RT,A}H_{s, \infty}(V_{RE,i}) (1-s_{RT,A})-\beta_{RT,A}s_{RT,A}
\end{array}
\label{thalamic_TC}
\end{equation}
Here $I_{ionic,TC}$ refers to the ionic currents present in the model for the TC cell. For the $i$th TC cell, this depends on the voltage, $V_{TC,i}$, and the gating variables for the currents $h_{TC,i}$ and $r_{TC,i}$ in the cell. Details of the ionic currents can be found in~\ref{appendix}.

Note that $g_{RT,A}$ corresponds the maximal conductance of the $GABA_{A}$ synapses (inhibitory A) from the RE cell to the TC cell and $s_{RT,A}$ is the corresponding gating variable. Similarly for the other synapse, $GABA_{B}$, corresponding to $g_{RT,B}$ and $s_{RT,B}$. The model for $s_{RT,B}$ is more complicated and  involves two differential equations, so we have not included it for simplicity. $g_{RT,A}$ and $V_{inh}$ represent the maximum conductance and the reversal potential, respectively, of the inhibitory synapse. More details concerning the biophysical significance of each term are described in~\cite{Golomb94,TBK96}.

The equations for each RE cell are:
\begin{equation}
\begin{array}{rcl}
\displaystyle\frac{dV_{RE,i}}{dt} &=& -I_{ionic,RE}(V_{RE,i},m_{RE,i},h_{RE,i}) \\
&&-g_{RR}(V_{RE,i}-V_{inh})\sum\limits_{i=1}^N\,s_{RR,i}
- g_{TR}(V_{RE,i}-V_{exc})\sum\limits_{i=1}^N \,s_{TR,i} \\
\displaystyle\frac{dm_{RE,i}}{dt}&=& \mu_1Ca_{RE}(1-m_{RE,i})-\mu_2m_{RE,i} \\
\displaystyle\frac{dh_{RE,i}}{dt}&=&\displaystyle\frac{h_{RE,\infty}(V_{RE,i})-h_{RE,i}}{\tau_{hRE}(V_{RE,i})}\\
\displaystyle\frac{dCa_{RE,i}}{dt} &=&-\nu I_{RT}(V_{RE,i},h_{RE,i})-\gamma Ca_{RE,i}\\
&&\\
\displaystyle\frac{ds_{RR,i}}{dt}&=& \alpha_{RR}H_{s, \infty}(V_{RE,i}) (1-s_{RR,i})-\beta_{RR}s_{RR,i}\\
\displaystyle\frac{ds_{TR,i}}{dt}&=&\alpha_{TR}H_{s, \infty}(V_{TC,i}) (1-s_{TR,i})-\beta_{TR}s_{TR,i}\\
\end{array}
\label{thalamic_RE}
\end{equation}

Similarly, $I_{ionic,RE}$ refers to the ionic currents present in the model for the $i$th RE cell, which depends on the voltage of the RE cell, $V_{RE,i}$, the gating variables for the currents, $m_{RE,i}$ and $h_{RE,i}$, and the calcium concentration in the RE cell, $Ca_{RE,i}$. The gating variables all following models of a similar form, based on the Hodgkin Huxley approach~\cite{HH}. The second term, $g_{RR}(V_{RE,i}-V_{inh})\sum_{i=1}^N\,s_{RR,i}$, represents the inhibitory input from other RE cells, referred to as self-inhibition. The last synaptic term, $g_{TR}(V_{RE,i}-V_{exc})\sum_{i=1}^N \,s_{TR,i}$, represents excitatory input from the TC cells where the sum is over all TC cells that send excitatory input to the $i$th RE cell. The synaptic variables $s_{RR,i}$ and $s_{TR,i}$ satisfy the last two equations given in (\ref{thalamic_RE}). $g_{TR}$ and $V_{exc}$ represent the maximum conductance and the reversal potential, respectively, of the excitatory synapse. 

To reduce the above model with arbitrary number of TC and RE cells to a model which is similar to what we have analyzed we get rid of all the differential equations for the $s$ variables, make the $s$ variables directly dependent on the voltages and add coupling delays to the voltages 
\begin{equation}
\begin{array}{rcl}
\displaystyle\frac{dV_{TC,i}}{dt} &=& -I_{ionic,TC}(V_{TC,i},r_{TC,i},h_{TC,i}) \\
&&- g_{RT,A}(V_{TC,i}-V_{inh})\,s_{RT}(V_{RE}(t-\tau_{RT}))\\
\displaystyle\frac{dh_{TC,i}}{dt}&=& \displaystyle\frac{h_{TC,\infty}(V_{TC,i})-h_{TC,i}}{\tau_{hTC}(V_{TC,i})}\\
&&\\
\displaystyle\frac{dr_{TC,i}}{dt}&=& \displaystyle\frac{r_{TC,\infty}(V_{TC,i})-r_{TC,i}}{\tau_{rTC}(V_{TC,i})}\\
\displaystyle\frac{dV_{RE}}{dt} &=& -I_{ionic,RE}(V_{RE},h_{RE},m_{RE}) \\
&& -g_{TR}(V_{RE}-V_{exc}) \sum\limits_{i=1}^N\,s_{TR,i}(V_{TC,i}(t-\tau_{TR})) \\
\displaystyle \frac{dm_{RE}}{dt}&=& \mu_1Ca_{RE}(1-m_{RE})-\mu_2m_{RE} \\
\displaystyle\frac{dh_{RE}}{dt}&=& \displaystyle\frac{h_{RE,\infty}(V_{RE})-m_{RE}}{\tau_{hRE}(V_{RE})}\\
\displaystyle\frac{dCa_{RE}}{dt} &=&-\nu I_{RT}(V_{RE},h_{RE})-\gamma Ca_{RE}\\
\end{array}
\label{thalamic_eqs}
\end{equation}
where the nonlinear functions $s_{RT}$ and $s_{TR}$ are sigmoidal using smooth approximations of Heaviside step functions with parameters based on the the synaptic equations in the model~\cite{RT00b}. Details are in~\ref{appendix}.

In the original model of Rubin and Terman~\cite{RT00b}, they argued that indirect 
synapses such as $GABA_B$ were necessary to stabilize the oscillations.
In our simple model of Section~\ref{simple}, this was not true if delays are present. Thus we have dropped the $GABA_B$ synapses in our reduced thalamic model. Further, for simplicity, we have dropped the self-inhibition from the RE cells, as this is not present in our simple model of the previous section. In fact, if we carried out some simulations with self-inhibition present, we found that oscillations are still possible. 
 
Figure~\ref{thalamicfig} shows the results of simulation of the model \eqref{thalamic_eqs} with parameters listed in~\ref{appendix} which are adapted from \cite{RT00b}. Note that the active phase of the RE cell is longer than that of the TC cell. The simulations of this model reproduce all the behaviors seen in the simple model (Section~\ref{simple}) when the active phase of the $J$ cell is longer than that of the $E$-cells.  In particular, in all cases the two TC cells synchronize. The relative size of the delays in the excitatory and inhibitory connections determines the phase relationship between the TC cells and the RE cell.

\begin{figure} [htb!]
\centering
\includegraphics[height=55mm, width=58mm]{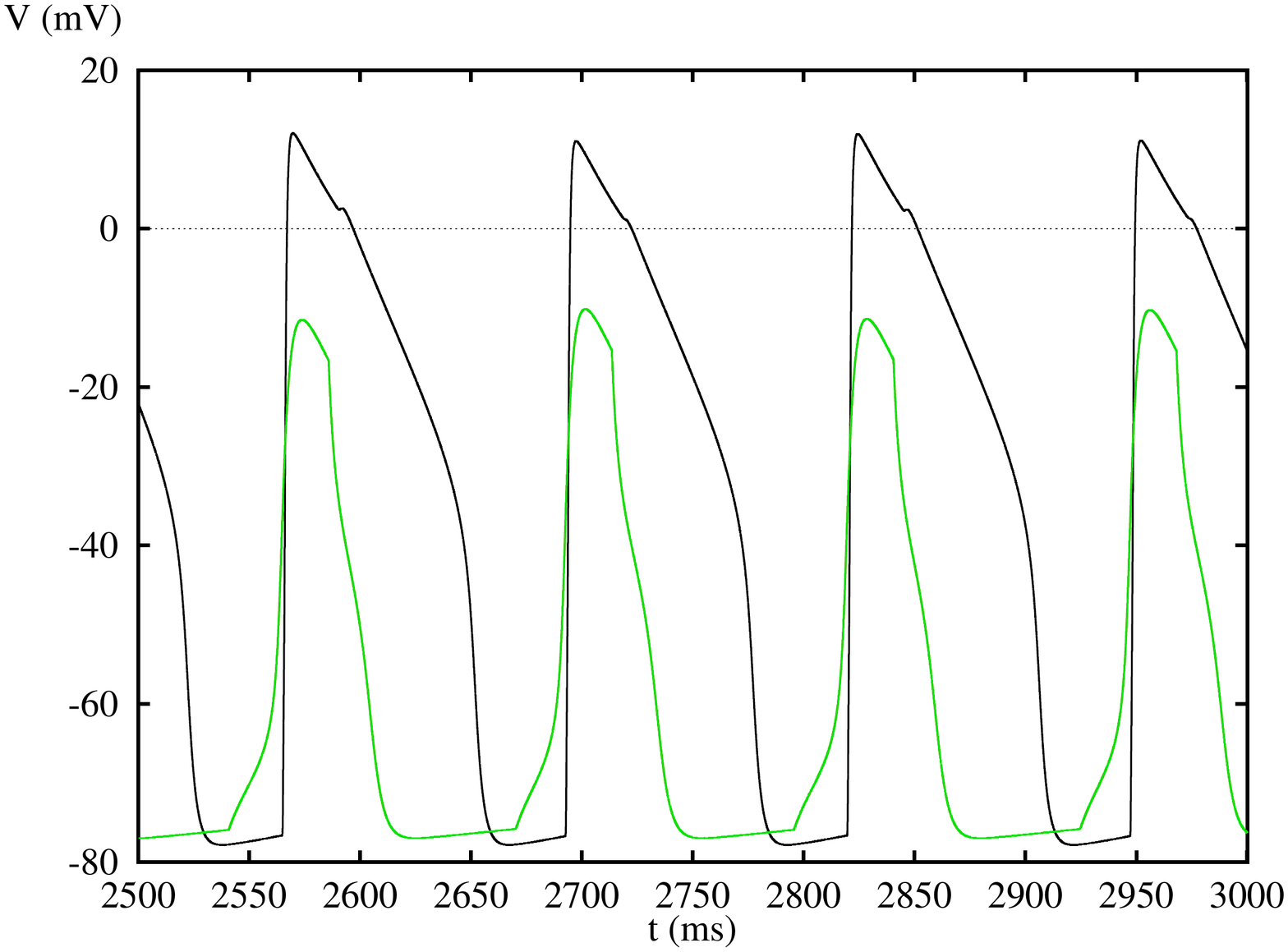}
\includegraphics[height=55mm, width=58mm]{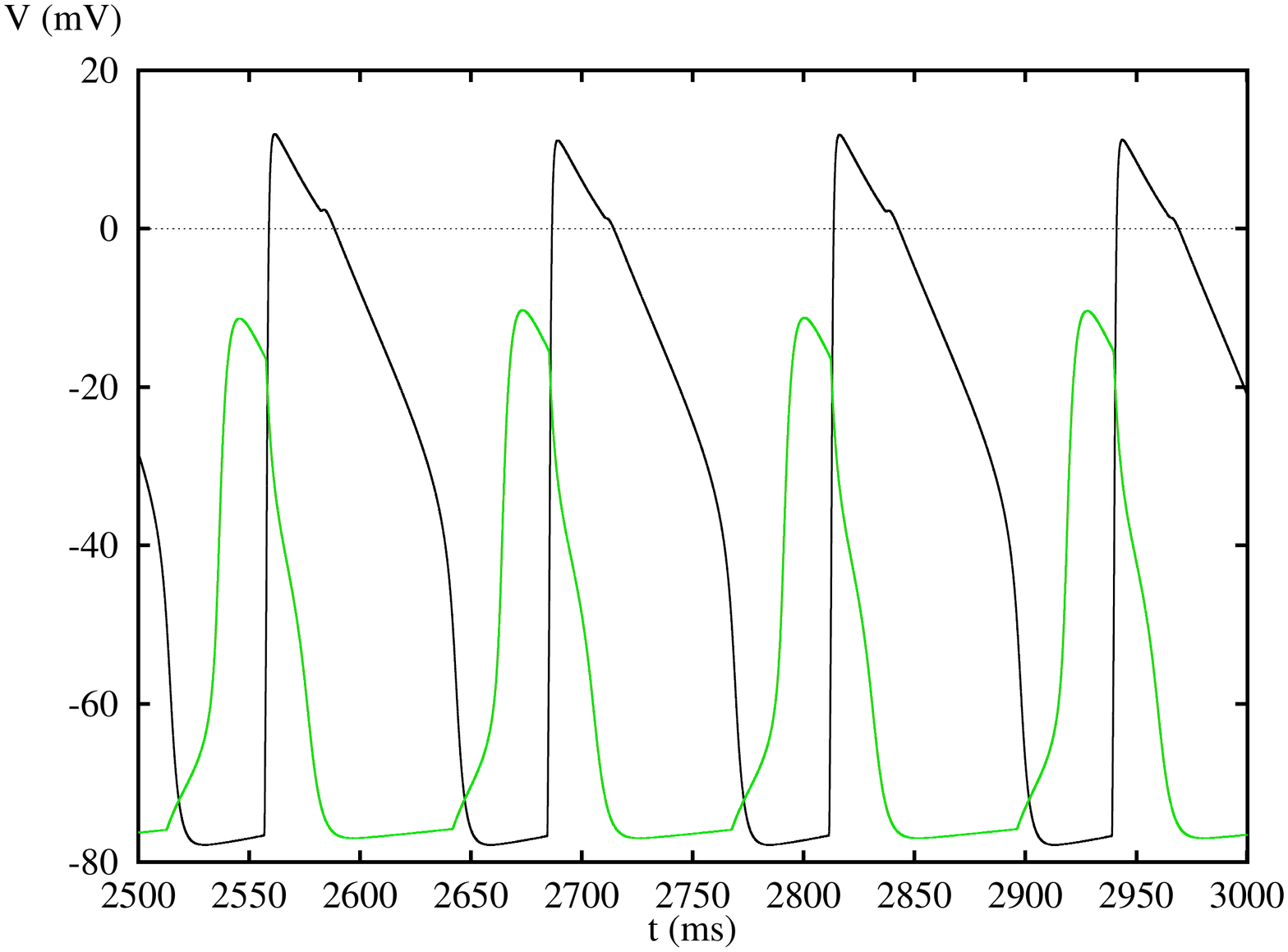}\\
\vspace{-0.5cm}
\includegraphics[height=55mm, width=58mm]{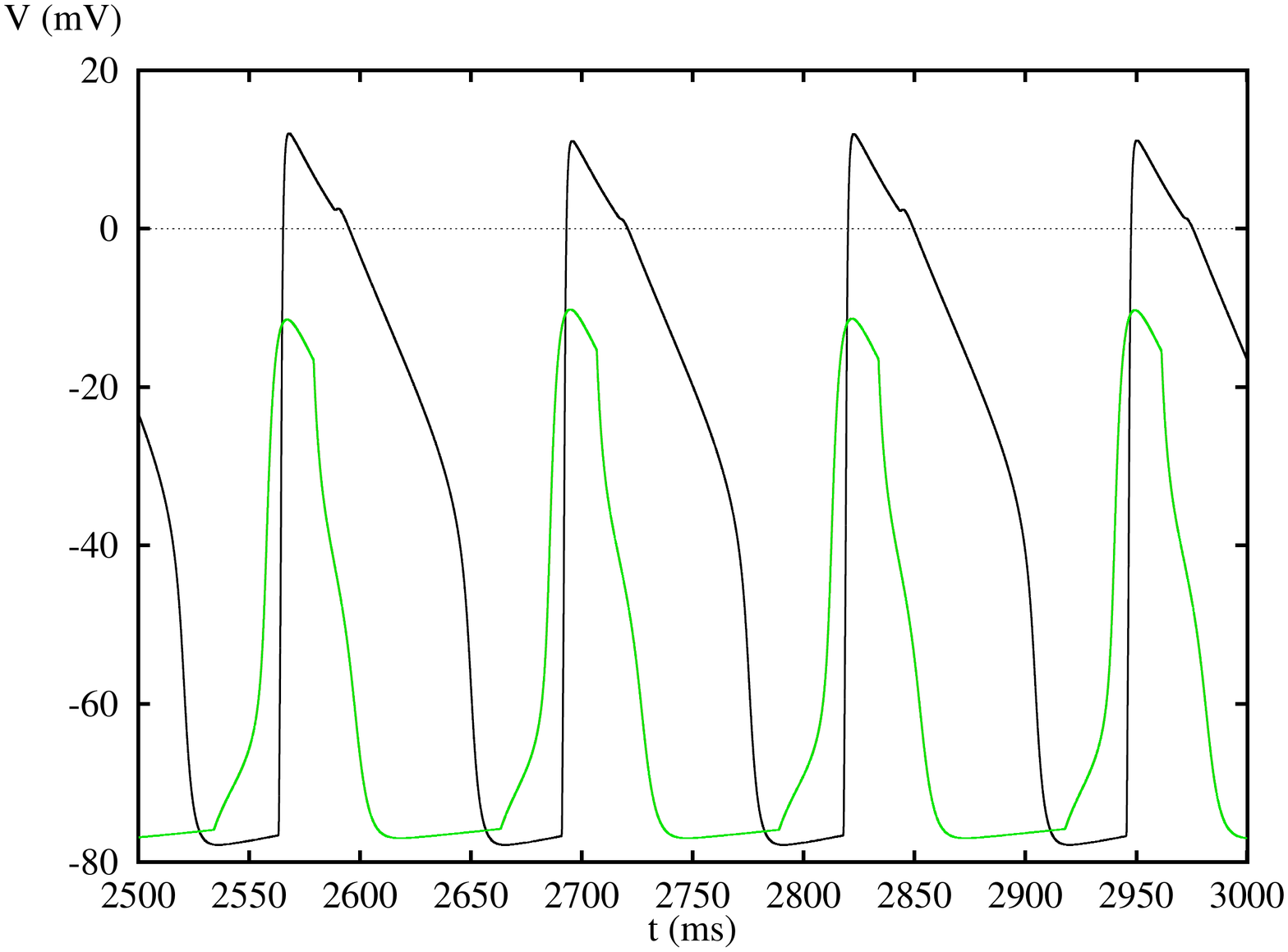}
\includegraphics[height=55mm, width=58mm]{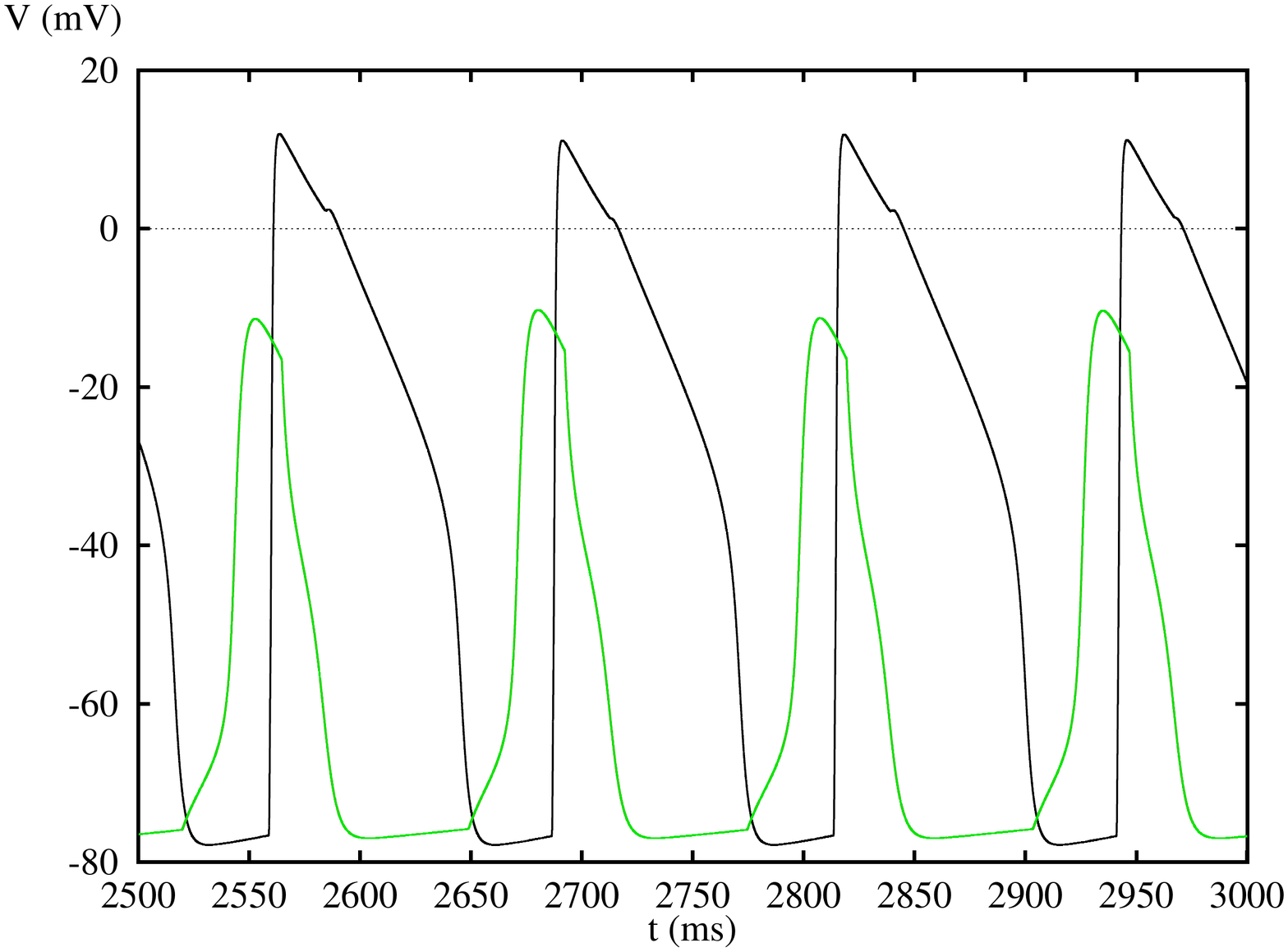}
\vspace{-0.5cm}
  \caption{Synchronous solutions for two TC cells and one RE cell for the thalamic model with different combinations of the two coupling delay. Single cell parameter values are used as in~\cite{RT00b} without self-inhibition on the RE cell and GABA$_B$ inhibition on TC cells.  A full list of parameter values is available in~\ref{appendix}. Other modification to model are described in the text. Coupling delays used are i) upper left figure: $\tau_{RT}=20$, $\tau_{TR}=0$; ii) upper right figure: $\tau_{RT}=0$, $\tau_{TR}=20$; iii) lower left figure: $\tau_{RT}=15$, $\tau_{TR}=5$; and iv) lower right figure: $\tau_{RT}=5$, $\tau_{TR}=15$. Black is the $RE$ cell, green and red are $TC$ cells.}
\label{thalamicfig}
\end{figure}

\section{Discussion} \label{discussion}
In this paper, we provide sufficient conditions for the existence and stability of synchronous solutions for globally inhibitory networks of oscillators with coupling delays. The model we develop and analyze is biologically motivated by several neural systems with this network structure. In the context of sleep rhythms, synchronization is one of the common rhythmic behaviors, in which excitatory thalamocortical relay cells fire together while receiving a global inhibition from a population of inhibitory thalamic reticular cells \cite{DS97}. In sensory processing, synchronization through global inhibition can 
be important for a network of excitatory neurons to produce the correct response to a given input \cite{doiron2003}.
Regardless of the relative duration of active phases for these two distinct populations, we showed that certain conditions imply the existence and stability of synchronous solutions, and the coupling delays play a significant role in the generation of synchronized network behaviors. 

In Section~\ref{analysis}, we apply geometric singular perturbation methods to prove existence and stability conditions in terms of the delays, $\tau_J$ corresponding to the delay in inhibitory synapses and $\tau_E$ corresponding to that in excitatory synapses. These delays represent both the time for information to travel 
between neurons and to be processed at the synapse.
We show that the presence of delays makes it possible for the network to exhibit synchronous solutions. In contrast, if there is no delay in both synapses, our analysis demonstrated that synchronous solutions cannot be obtained. Based on the construction of synchronous solutions under certain conditions, we consequently identify the period of such solutions in terms of the lengths of both delays. 

In related studies by Rubin and Terman~\cite{RT00,RUBIN02}, they assume that the inhibitory cells have a longer active phase than the excitatory cells, based on the experimental findings that thalamic reticular cells are known to have longer active states than relay cells~\cite{DS97}. However, as we indicate above, global inhibitory networks occur in other neural systems, where this assumption may or may not be true. Thus, we extend their analysis by considering the opposite case where excitatory cells have a longer active phase than the inhibitory cells, and show synchronous solutions can also exist as long as there are delays present. 

We provide numerical simulations using XPP~\cite{Ermentrout02} in Section~\ref{numerical} to supplement and
validate our analytical results provided in Section~\ref{analysis}. We specify explicit forms for the 
nonlinearities in our generic two-dimensional, relaxation oscillator model, Eqs.~(\ref{excx})--(\ref{inhy}),
which have appropriate form for the nullclines. The numerical simulations of this model confirm that the presence of the delay in either synapse is an essential factor to generate the synchronous solutions.  

One advantage of the explicit representation of coupling delays in model equations, unlike the models by Rubin and Terman~\cite{RT00,RUBIN02}, is that we can conduct a systematic study for the existence of solutions depending on the length of delays. This allows us to extend the way that the effect of synaptic delays were incorporated in previous models~\cite{RT00,RUBIN02} while having slightly simpler model equations. Thus, in Section~\ref{simple}, we consider three different combinations of inhibitory and excitatory delays for each case of which cell has the longer active phase. The model simulations demonstrate that all of the combinations result in synchronized behaviors among the excitatory cells. These synchronized oscillations differ from each other only in terms of the phase difference between excitatory and inhibitory cells' oscillations; their qualitative features are the same. 

In addition, our model extends the work on the effect of delays in ~\cite{CampWang98,FJWC01} in that ours 
analyzes a network of excitable neurons with both excitatory and inhibitory synapses whereas theirs 
focused on system with excitatory neurons which are oscillatory, that is, where each uncoupled neuron can oscillate without synaptic coupling. 

In Section~\ref{thalamic}, we consider a more biologically realistic model, based on a model for thalamic 
rhythms  \cite{GR94,Golomb94,TBK96,RT00b,RUBIN02}.  
This model is higher dimensional and thus does not fit the exact framework of our analysis, nevertheless
we observe that the presence of delay is, again, an essential factor to generate the synchronous solutions
of the excitatory cells.  This model even reproduces the relative phase relationships between the excitatory 
and inhibitory populations predicted by our analysis, depending on the relative sizes of delays that occur in 
the excitatory and inhibitory synapses. Analysis of this model would require looking at singularly perturbed
systems with higher dimensional slow subsystems, which is possible \cite{SW04,RW07}, but outside the scope 
of this article. These simulations are suggestive that the delay mechanisms we demonstrate in our analysis
of two dimensional relaxation oscillators carry over to systems with higher dimensional slow subsystems.

Our model extends some of previous modeling work on synaptic delays in biologically relevant neural networks. However,  analysis on other types of network behaviors, such as clustered patterns, is also needed to obtain a more complete understanding of how different population rhythms arise as a result of the interaction between coupling delays, intrinsic properties of each cell and network architecture. Also, since the present model assumes that $J$ population is nearly synchronized so that it can be viewed as a single cell, we can relax this condition by allowing the interaction between neurons in the $J$ population, which may result in different population behaviors other than synchronization, such as clustering. Investigating the role of the interactions between inhibitory thalamic reticular cells in the thalamocortical networks is worthy of further investigation in the context of network firing patterns.
\vspace{-0.2cm}
\appendix \section{Model Equations and Parameters}\label{appendix}
The equations for thalamocortical relay (TC) and reticular (RE) cells in the thalamic network are given in Section~\ref{thalamic}. We first provide model forms of nonlinear functions for the original thalamic model described in (\ref{thalamic_TC}) and (\ref{thalamic_RE}), and then model parameters used in (\ref{thalamic_eqs}) to generate Figure~\ref{thalamicfig}. As our biophysical neuronal network of excitable cells, we considered a pair of TC cells, each governed by the first three equations in (\ref{thalamic_eqs}), and one RE cell which satisfies the rest four equations of (\ref{thalamic_eqs}). For simplicity the index $i$ for TC cells is omitted in the equations that describe them.

\vspace{-0.1cm}
\subsection*{\it \small TC Cells}
The ionic current term, $I_{ionic,TC}$, in (\ref{thalamic_TC}) consist of three components:
\begin{align*}
I_{ionic,TC}=I_{TT}(V_{TC,i}, h_{TC,i})+I_{sag}(V_{TC,i}, r_{TC,i})+I_L(V_{TC,i})
\end{align*}

\noindent $I_{TT}$:
\begin{align*}
I_{TT}(V, h)&=g_{Ca,TC}m_{TC, \infty}^2(V)h(V-V_{Ca,TC}) \\
m_{TC, \infty}(V)&=[1+\exp(-(V-\theta_{tm})/\sigma_{tm})]^{-1}\\
h_{TC, \infty}(V)&=[1+\exp(-(V-\theta_{th})/\sigma_{th})]^{-1}\\
\tau_{hTC}(V)&=\tau_{h0}+\tau_{h1}[1+\exp(-(V-\theta_{\tau h})/\sigma_{\tau h})]^{-1}
\end{align*}
where all model parameters appeared in the equations above can be adapted from \cite{Golomb94,RT00b} for different types of solutions such as synchronization and clustering.

\noindent $I_{sag}$ and $I_{L}$:
\begin{align*}
I_{sag}(V, r)&=g_{sag}r(V-V_{sag}) \\
r_{TC, \infty}(V)&=[1+\exp(-(V-\theta_r)/\sigma_r)]^{-1}\\
\tau_{rTC}(V)&=\tau_{r0}+\tau_{r1}[\exp(-(V-\theta_{\tau r0})/\sigma_{\tau r0})+\exp(-(V-\theta_{\tau r1})/\sigma_{\tau r1})]^{-1}\\
I_{L}(V)&=g_{L}(V-V_{L})
\end{align*}

In the synaptic term, $s_{RT,A}$, $H_{s, \infty}$ represents the approximation of the Heaviside function with the form
\begin{align*}
H_{s, \infty}(V)&=[1+\exp(-(V-\theta_{s})/\sigma_{s})]^{-1}
\end{align*}
\vspace{-0.1cm}

\subsection*{\it \small RE Cell}
The ionic current term, $I_{ionic,RE}$, in (\ref{thalamic_RE}) consist of three components:
\begin{align*}
I_{ionic,RE}=I_{RT}(V_{RE}, h_{RE})+I_{AHP}(V_{RE}, m_{RE})+I_{RL}(V_{RE})
\end{align*}

\noindent $I_{RT}$:
\begin{align*}
I_{RT}(V, h)&=g_{Ca,RE}m_{RE, \infty}^2(V)h(V-V_{Ca,RE}) \\
m_{RE, \infty}(V)&=[1+\exp(-(V-\theta_{rm})/\sigma_{rm})]^{-1}\\
h_{RE, \infty}(V)&=[1+\exp(-(V-\theta_{rh})/\sigma_{rh})]^{-1}\\
\tau_{rRE}(V)&=\tau_{r0}+\tau_{r1}[1+\exp(-(V-\theta_{\tau rh})/\sigma_{\tau rh})]^{-1}
\end{align*}
The specific model parameters used to show different network behaviors are given in \cite{Golomb94,RT00b}. Similar to $s_{RT,A}$ term, the equations for the synaptic terms, $s_{RR,i}$ and $s_{TR,i}$, in each $i$th TC cell use the approximations of Heaviside step function.

\noindent $I_{AHP}$ and $I_{RL}$:
\begin{align*}
I_{AHP}(V, m_{RE})&=g_{AHP}m_{RE}(V-V_{K}) \\
I_{RL}(V)&=g_{RL}(V-V_{RL})
\end{align*}

For the reduced model described in (\ref{thalamic_eqs}), the specific form for the synaptic variable, $s_{i}, i=RT, TR$, is given by
\begin{align*}
s_{i}(V)&=[1+\exp(-(V-\theta_{i})/\sigma_i)]^{-1}.
\end{align*}

To generate synchronized solutions among two TC cells receiving global inhibition from one RE cell in the reduced model (\ref{thalamic_eqs}) displayed in Figure~\ref{thalamicfig}, we used the following parameter values based on Refs.~\cite{Golomb94} and \cite{RT00b}. Two TC cells: $I_{TT}$: $g_{Ca,TC}=1.5$~mS/cm$^2$, $\theta_{tm}=-59.0$~mV, $\sigma_{tm}=9.0$~mV, $V_{Ca,TC}=90$~mV, $\theta_{th}=-82$~mV, $\sigma_{th}=-5.0$~mV, $\tau_{h0}=66.\bar{6}$~ms, $\tau_{h1}=333.\bar{3}$~ms, $\theta_{\tau h}=-78.0$~mV, $\sigma_{\tau h}=-1.5$~mV; $I_{sag}$: $g_{sag}=0.15$~mS/cm$^2$,  $V_{sag}=-40$~mV, $\theta_r=-75.0$~mV, $\sigma_r=-5.5$~mV, $\tau_{h0}=20.0$~ms, $\tau_{h1}=1000.0$~ms, $\theta_{\tau h0}=-71.5$~mV, $\sigma_{\tau h0}=-14.2$~mV, $\theta_{\tau r1}=-89$~mV, $\sigma_{\tau r1}=-11.6$~mV; $I_{L}$: $g_L=0.2$~mS/cm$^2$, $V_L=-76.0$~mV; $s_{RT}$: 
$g_{RT, A}=0.1$~mS/cm$^2$, $V_{inh}=-84$~mV, $\theta_{RT}=-50$~mV, $\sigma_{RT}=0.5$~mV. 

One RE cell: $I_{RT}$: $g_{Ca,RE}=2.0$~mS/cm$^2$, $\theta_{rm}=-52.0$~mV, $\sigma_{rm}=9.0$~mV, $V_{Ca,RE}=90$~mV, $\theta_{rh}=-72$~mV, $\sigma_{rh}=-2.0$~mV, $\tau_{r0}=66.\bar{6}$~ms,  $\tau_{r1}=333.\bar{3}$~ms, $\theta_{\tau rh}=-78.0$~mV, $\sigma_{\tau rh}=-1.0$~mV; $I_{AHP}$: $g_{AHP}=0.1$~mS/cm$^2$,  $V_{K}=-90$~mV, $\mu_1=0.02$~1/ms, $\mu_2=0.025$~1/ms, $\nu=0.01$~1/ms, $\gamma=0.08$~1/ms; $I_{RL}$: $g_{RL}=0.3$~mS/cm$^{2}$, $V_{RL}=-76.0$~mV; $s_{TR}$: $g_{TR}=0.6$~mS/cm$^2$, $V_{exc}=0$~mV, $\theta_{TR}=-35$~mV, $\sigma_{TR}=0.5$~mV.



\vspace{0.5cm}
\noindent {\bf Acknowledgements.}
{\small The first author was supported by a University of Hartford Greenberg Junior Faculty Grant. This support does not necessarily imply endorsement by the University of Hartford of project conclusions. The second author was supported by a grant from the Natural Sciences and Engineering Research Council of Canada. }

\clearpage
  \bibliographystyle{spmpsci}  
  \bibliography{refs2}





\end{document}